\documentclass{amsart}

\usepackage{amssymb, amscd,stmaryrd,setspace,hyperref,color}

\input xy
\xyoption{all} \xyoption{poly} \xyoption{knot}\xyoption{curve}

\newcommand{\comment}[1]{}

\def\tn{\textnormal}

\def\mc{\mathcal}

\def\ZZ{{\mathbb Z}}

\def\RR{{\mathbb R}}

\def\bD{{\bf \Delta}}
\def\Str{{\bf Str}}

\def\Hom{\tn{Hom}}

\def\to{\rightarrow}
\def\from{\leftarrow}
\def\cross{\times}
\def\taking{\colon}
\def\inj{\hookrightarrow}

\def\too{\longrightarrow}
\def\fromm{\longleftarrow}
\def\down{\downarrow}

\def\ss{\subset}

\def\iso{\cong}
\def\|{{\;|\;}}
\def\m1{{-1}}
\def\op{^\tn{op}}

\def\ol{\overline}

\def\setto{\colon\hspace{-.25cm}=}

\def\ullimit{\ar@{}[rd]|(.3)*+{\lrcorner}}
\def\urlimit{\ar@{}[ld]|(.3)*+{\llcorner}}
\def\lllimit{\ar@{}[ru]|(.3)*+{\urcorner}}
\def\lrlimit{\ar@{}[lu]|(.3)*+{\ulcorner}}
\def\ulhlimit{\ar@{}[rd]|(.3)*+{\diamond}}
\def\urhlimit{\ar@{}[ld]|(.3)*+{\diamond}}
\def\llhlimit{\ar@{}[ru]|(.3)*+{\diamond}}
\def\lrhlimit{\ar@{}[lu]|(.3)*+{\diamond}}
\newcommand{\clabel}[1]{\ar@{}[rd]|(.5)*+{#1}}

\newcommand{\arr}[1]{\ar@<.5ex>[#1]\ar@<-.5ex>[#1]}
\newcommand{\arrr}[1]{\ar@<.7ex>[#1]\ar@<0ex>[#1]\ar@<-.7ex>[#1]}
\newcommand{\arrrr}[1]{\ar@<.9ex>[#1]\ar@<.3ex>[#1]\ar@<-.3ex>[#1]\ar@<-.9ex>[#1]}
\newcommand{\arrrrr}[1]{\ar@<1ex>[#1]\ar@<.5ex>[#1]\ar[#1]\ar@<-.5ex>[#1]\ar@<-1ex>[#1]}

\newcommand{\To}[1]{\xrightarrow{#1}}

\newcommand{\From}[1]{\xleftarrow{#1}}

\newcommand{\Adjoint}[4]{\xymatrix@1{#2 \ar@<.5ex>[r]^-{#1} & #3 \ar@<.5ex>[l]^-{#4}}}
\newcommand{\adjoint}[4]{\xymatrix@1{#1\colon #2\ar@<.5ex>[r]& #3\;:#4 \ar@<.5ex>[l]}}

\def\id{\tn{id}}
\def\Top{{\bf Top}}
\def\Cat{{\bf Cat}}
\def\Sets{{\bf Sets}}

\def\Pre{{\bf Pre}}
\def\Shv{{\bf Shv}}

\def\colim{\mathop{\tn{colim}}}

\def\mcA{\mc{A}}
\def\mcB{\mc{B}}
\def\mcC{\mc{C}}
\def\mcD{\mc{D}}

\def\mcK{\mc{K}}
\def\mcL{\mc{L}}

\def\mcR{\mc{R}}
\def\mcS{\mc{S}}

\def\mcU{\mc{U}}

\def\mcX{\mc{X}}
\def\mcY{\mc{Y}}

\newtheorem{theorem}{Theorem}[subsection]
\newtheorem{lemma}[theorem]{Lemma}
\newtheorem{proposition}[theorem]{Proposition}
\newtheorem{corollary}[theorem]{Corollary}

\theoremstyle{remark}
\newtheorem{remark}[theorem]{Remark}
\newtheorem{example}[theorem]{Example}

\newtheorem{question}[theorem]{Question}
\newtheorem{guess}[theorem]{Guess}

\newtheorem{construction}[theorem]{Construction}

\theoremstyle{definition}
\newtheorem{definition}[theorem]{Definition}

\def\DT{{\bf DT}}

\def\DB{\GD}
\def\Sch{{\bf Sch}}

\def\Strings{{\bf Strings}}
\def\ND{{\bf ND}}
\def\Tables{{\bf Tables}}
\def\'{\tn{'}}

\def\Rel{{\bf Rel}}
\def\mcRel{{\bf \mcR el}}
\def\Cech{$\check{\tn{C}}$ech }
\def\C{\check{\tn{C}}}

\def\singleton{\{*\}}
\def\Sub{{\bf Sub}}
\def\card{\tn{card}}
\def\Data{{\bf DB}}
\def\DB{{\bf DB}}
\def\im{\tn{im}}
\def\'{\tn{'}}

\usepackage{graphicx}

\begin{document}

\author{David I. Spivak}

\thanks{This project was supported in part by the Office of Naval Research.}

\title{Simplicial Databases}

\begin{abstract}

In this paper, we define a category $\Data$, called the category of simplicial databases, whose objects are databases and whose morphisms are data-preserving maps.  Along the way we give a precise formulation of the category of relational databases, and prove that it is a full subcategory of $\Data$.  We also prove that limits and colimits always exist in $\Data$ and that they correspond to queries such as select, join, union, etc.

One feature of our construction is that the schema of a simplicial database has a natural geometric structure: an underlying simplicial set.  The geometry of a schema is a way of keeping track of relationships between distinct tables, and can be thought of as a system of foreign keys.  The shape of a schema is generally intuitive (e.g. the schema for round-trip flights is a circle consisting of an edge from $A$ to $B$ and an edge from $B$ to $A$), and as such, may be useful for analyzing data.  

We give several applications of our approach, as well as possible advantages it has over the relational model.  We also indicate some directions for further research.

\end{abstract}

\maketitle

\setcounter{tocdepth}{1}

\tableofcontents

\section{Introduction}\label{sec:intro}

The theory of relational databases is generally formulated within mathematical logic.  We provide a more modern and more flexible approach using methods from category theory and algebraic topology.  Category theory is useful both as a language and as a tool, and has been successfully applied to many areas of computer science.  Using an inefficient language can hamper ones ability to implement, work with, and reason about a subject.  This can be seen as one reason that SQL implements tables, rather than relational databases in their pure form: perhaps mathematical logic is not a sufficiently flexible language for discussing databases as they are used in practice.

One reason that relational databases have been so successful is that their definition can be phrased within a precise mathematical language.  The definition we provide in this paper is just as precise, if not more so (see the discussion at the beginning of Section \ref{sec:schemas and databases}).  However, we go beyond simply defining the {\em objects} of study (databases), but instead continue on to define {\em morphisms} between databases.  With these definitions, we have a category of databases.

There are many categories whose objects are databases (the difference being in their morphisms); what makes one definition better than another?  First, a good definition should make sense -- the morphisms should somehow preserve the structure of the databases.  Second, applying common categorical constructions (colimits, limits, etc.) to the category of databases should result in common database constructions, such as unions, joins, etc.  Third, the categorical approach should make reasoning about databases, such as that needed for maintaining and restructuring databases, easier.  

Our formulation accomplishes these three goals (see Remark \ref{rem:data integrity}, and Sections \ref{sec:constructions for databases} and \ref{sec:applications}, respectively).  As an added bonus, the schemas for our databases have geometric structure (more precisely, the structure of a simplicial set).  In other words, the schema is given as a geometric object which one should think of as a kind of Entity-Relationship diagram for the schema.  This approach may lead to improvements in query optimization because one can adjust the ``shape" of the schema to fit with the purposes of the queries to be taken.  The ability to visualize data should also prove useful, because these visualizations seem to ``make sense" in practice.  Examples of this phenomenon are given in \ref{ex:flights} and \ref{ex:sex}, where we respectively discuss round trip flights and a sociological experiment involving 4-cycles in high school partnerships.

The data on a given schema is given by a sheaf of sets on that schema.  Sheaves are ubiquitous in modern mathematics because they generalize sets and functions and because they have good formal properties.  Classical operations on sheaves (such as direct images) allow one to transport data from one schema to another in a functorial way.  One of the main purposes of this paper is to provide a good language for discussing databases mathematically, and the consideration of data as a sheaf on a given schema helps to accomplish that goal.

Other researchers have formulated databases in terms of category theory (for example, see \cite{RW},\cite{JRW},\cite{PS},\cite{Ber},\cite{DK},\cite{Dis},\cite{GB}).  Of note is work by Cadish and Diskin, and work by Rosebrugh and Wood.  There are many differences between previous viewpoints and our own.   Most notably, our work uses simplicial methods to give a geometric structure to the schemas of databases and uses sheaves over these spaces to model the data itself.  Both of these approaches appear to be new. 

We assume throughout this paper that the reader has a basic knowledge of category theory which includes knowing the definition of category, functor, limit, and colimit, as well as basic facts such as Yoneda's lemma.  Good references for this material include \cite{Mac},\cite{BW}, and \cite{Bor1}.  We do not assume that the reader has a prior knowledge of sheaves or of simplicial sets.

We begin by defining the category of tables, in Section \ref{sec:tables}.  In Section \ref{sec:constructions for tables}, we prove that the category of tables is closed under limits and certain colimits, and that these constructions correspond to joins and unions.  We also prove that projections and deletions are easily defined under our formulation.  In Section \ref{sec:schemas and databases}, we first give a brief description of simplicial sets.  We then proceed to define the category of simplicial databases.  In Section \ref{sec:constructions for databases}, we prove that the category of simplicial databases is closed under all limits and colimits and prove that they again correspond to joins and unions.  Finally in Section \ref{sec:applications}, we discuss some applications of our model and directions for future research.

\subsection{Acknowledgments}

I would like to thank Paea LePendu for explaining relational databases to me, for suggesting that databases should be categorified, and for his advice and encouragement throughout the process.  I would also like to thank Chris Wilson for several useful conversations.

\section{The category of Tables}\label{sec:tables}

It is no accident that SQL uses tables instead of relations: Tables are inherently more useful, yet just as easy to implement.  They are disliked by the purists of relational database theory not because they are bad, but because they do not fit in with that theory.  In this section we provide a categorical structure to the set of tables, thus firmly grounding it in rigorous mathematics.  

\subsection{Data types}

In order to define schemas, records, and tables of a given type, we need to define what we mean by ``type." 

\begin{definition}

A {\em type specification} is simply a function between sets $\pi\taking U\to \DT$.  The set $\DT$ is called the set of {\em data types} for $\pi$, and the set $U$ is called the {\em domain bundle} for $\pi$.  Given any element $T\in\DT$, the preimage $\pi^\m1(T)\ss U$ is called the {\em domain of $T$}, and an element $x\in\pi^\m1(T)$ is called an {\em object of type $T$}.

\end{definition}

\begin{example}\label{ex:type specification}

Let $U$ denote the disjoint union $U\setto (\ZZ\amalg\RR\amalg\Strings)$ and let $\DT$ denote the three element set $\{`\ZZ\', `\RR\', `\Strings\'\}$.  Let $\pi\taking U\to\DT$ denote the obvious function, which send all of $\ZZ$ to the element $`\ZZ\'$, all of $\RR$ to $`\RR\'$, and all of $\Strings$ to $`\Strings\'$. 
The preimage $\pi^\m1(`\Strings\')\ss U$, which we have called the domain of the type $`\Strings\'$, is indeed the set of strings.

As another example, the mod 2 function $\pi\taking\ZZ\to\{\tn{`even'},\tn{`odd'}\}$ is a type specification in which the objects of type `even' are the even integers.

\end{example}

\subsection{Schemas}

We quickly recall the definition of fiber product (for sets).

\begin{definition}

Let $A, B,$ and $C$ be sets, and suppose $f\taking A\to B$ and $g\taking C\to B$ are functions with the same codomain.  The {\em fiber product of $A$ and $C$ over $B$}, denoted $A\cross_BC$, is the set $$A\cross_BC\setto \{(a,c)\in A\cross C | f(a)=g(c)\in B\}.$$  The fiber product moreover comes equipped with obvious projection maps making the diagram $$\xymatrix{A\cross_BC\ar[r]^-{f'}\ar[d]_{g'}\ullimit&C\ar[d]^g\\A\ar[r]_f&B}$$ commute.  The corner symbol $\lrcorner$ serves to remind the reader that the object in the upper left is a fiber product.   We sometimes call $g'\taking A\cross_BC\to A$ the {\em pullback of $g$ along $f$}; similarly $f'$ is the pullback of $f$ along $g$.

\end{definition}

\begin{remark}

The fiber product of the diagram $A\To{f}B\From{g}C$ above should probably be denoted $f\cross_Bg$ instead of $A\cross_BC$, since it depends on the maps $f$ and $g$, not just their domains.  However, this is not often done, and in this paper the maps will be clear from context.

\end{remark}

\begin{definition}\label{def:simple schema}

Let $\pi\taking U\to\DT$ denote a type specification.  A {\em simple schema of type $\pi$} consists of a pair $(C,\sigma)$, where $C$ is a finite (totally) ordered set and $\sigma\taking C\to\DT$ is a function.   We sometimes denote the simple schema $(C,\sigma)$ by $\sigma$.  We refer to $C$ as the {\em column set} or {\em set of attributes} for $\sigma$ and $\pi$ as the {\em type specification} for $\sigma$.

Let $U_\sigma\setto \sigma^\m1(U)$ denote the fiber product $U\cross_\DT C$.  We call the pullback $\pi_\sigma\taking U_\sigma\to C$, i.e. the left hand map in the diagram $$\xymatrix{U_\sigma\ar[r]\ar[d]_{\pi_\sigma}\ullimit&U\ar[d]^\pi\\C\ar[r]_\sigma&\DT,}$$ the {\em domain bundle on $C$} induced by $\sigma$.

\end{definition}

\begin{remark}\label{rem:order}

We do not worry much about the ordering on $C$, as evidenced by the fact that we do not record it in the notation $(C,\sigma)$ for the simple schema.  In fact the ordering requirement can be dropped from the definition if one so chooses.  

The reason we include it is first because the columns of a displayed table naturally come with an order (left to right), and second because it results in a more commonly used mathematical object down the road in Section \ref{sec:schemas and databases}.  See Remark \ref{rem:Grandis}.

\end{remark}

\begin{example}\label{ex:schema}

Let $\pi\taking U\to\DT$ denote the type specification of Example \ref{ex:type specification}.  Let $C=(\tn{`First Name', `Last Name',`Age'})$, and define $\sigma\taking C\to\DT$ by \begin{align*}\sigma(\tn{`First Name'})&=`\Strings\'\\\sigma(\tn{`Last Name'})&=`\Strings\'\\\sigma(\tn{`Age'})&=`\ZZ'\end{align*}   We see that $C$ is a set of attributes for the simple schema $\sigma$.  We call $C$ the column set because, once we arrange data in terms of tables, the columns of these tables will each be headed by an element of $C$.

One can check that the domain bundle $U_\sigma\to C$ induced by $\sigma$ is the obvious function $$(\Strings\amalg\Strings\amalg\ZZ)\too C.$$  Thus an object of type `First Name' is a string in this example.

\end{example}

\begin{definition}\label{def:category of schema}

Let $\pi\taking U\to\DT$ denote a type specification.  A {\em morphism of simple schemas (of type $\pi$)}, written $f\taking (C,\sigma)\to (C',\sigma')$, is an order-preserving function $f\taking C\to C'$ such that the triangle $$\xymatrix@=16pt{C\ar[rr]^f\ar[dr]_\sigma&&C'\ar[dl]^{\sigma'}\\&\DT}$$ commutes.

The {\em category of simple schemas on $\pi$}, denoted $\mcS^\pi$ is the category whose objects are simple schemas and whose morphisms are morphisms thereof.  

\end{definition}

\begin{remark}\label{rem:bD down DT}

Let $\bD$ denote the category of finite ordered sets.  Let $(\bD\down\DT)$ denote the category for which an object is a finite ordered set with a map to $\DT$ and for which a morphism is an order-preserving function, over $\DT$.  One can easily see that the category $\mcS^\pi$ is isomorphic to $(\bD\down\DT)$, regardless of $\pi$.  However, we should think of $\pi$ as part of the data for a simple schema.  

Note that the symbol $\bD$ typically refers to the category of {\em non-empty} finite ordered sets; one typically denotes the category of all finite ordered sets as $\bD_+$.  For typographical reasons, we do not follow the standard convention in this paper.  

\end{remark}

\subsection{Records and Tables}

\begin{definition}\label{def:records}

Let $(C,\sigma)$ be a simple schema.  A {\em record on $(C,\sigma)$} is a function $r\taking C\to U_\sigma$ such that $\pi_\sigma\circ r=\id_C$, i.e. a section of the domain bundle for $\sigma$.  We denote the set of records on $\sigma$ by $\Gamma^\pi(\sigma)$, or simply by $\Gamma(\sigma)$ if $\pi$ is understood.

\end{definition}

In other words, a record must produce, for each attribute $c\in C$, an object of type $\sigma(c)\in\DT$.  

\begin{example}\label{ex:record}

Let $\pi$ and $(C,\sigma)$ be as in Example \ref{ex:schema}.  A record on that simple schema is a section $r$ as depicted in the diagram $$\xymatrix{\Strings\amalg\Strings\amalg\ZZ\ar[d]^{\pi_\sigma}\\\{\tn{`First Name', `Last Name',`BYear'}\}.\ar@/^1pc/[u]^r}$$  That is, a record is a way to designate a first name and a last name (in $\Strings$) and an age (in $\ZZ$).  For example (Barack; Obama; 1961) denotes a record on this simple schema; that is, it defines a section of $\pi_\sigma$.

The set $\Gamma(\sigma)$ of records on $(C,\sigma)$ is simply the set of all possible such sections.  In this example $\Gamma(\sigma)=\Strings\cross\Strings\cross\ZZ$.

\end{example}

\begin{definition}\label{def:tables}

Let $\pi\taking U\to\DT$ be a type specification.  A {\em table of type $\pi$} consists of a sequence $(K,C,\sigma,\tau)$, where $K$ is a set, $(C,\sigma)$ is a simple schema of type $\pi$,  and $\tau\taking K\to\Gamma(\sigma)$ is a function.  We sometimes denote the table $(K,C,\sigma,\tau)$ simply by $\tau$.  The set $K$ is called the {\em set of keys of $\tau$}, and $(C,\sigma)$ is called the {\em simple schema of $\tau$}.

\end{definition}

\begin{remark}

Given a table $(K,C,\sigma,\tau)$, those familiar with SQL should think of the set $K$ of keys as the set of row identifiers for a table.  These row ids are always unique identifiers and serve as an internal key system for the table; they are generally not considered as part of the data.  

\end{remark}

\begin{remark}

We do not require our tables to have finitely many rows.  One could easily enforce such a restriction if desired, and follow the rest of the paper with that restriction in mind.  The resulting category would be a full subcategory of the one we present in Definition \ref{def:category of tables}, it would still be closed under finite limits (etc.), and queries would be taken in precisely the same way as they are here.

\end{remark}

\begin{example}\label{ex:table}

Given a simple schema $(C,\sigma)$, a table on it is simply a collection of records indexed by a set $K$.  The records need not be distinct because the set $K$ keeps track of the distinctions.  Continuing with $\pi$ and $(C,\sigma)$ as in Example \ref{ex:record}, we could have $K=\{1,2,`foo'\}$ and let $\tau\taking K\to\Gamma(\sigma)$ be the assignment \begin{align*} 1&\mapsto \tn{(Barack; Obama; 1961)}\\2&\mapsto \tn{(Michelle; Obama; 1964)}\\ `foo'&\mapsto\tn{(Barack; Obama; 1961)}\end{align*}  This table can be written in more standard form as: $$\begin{tabular}{|l||l|l|l|}\hline K&`First Name'&`Last Name'&`BYear'\\\hline\hline 1&Barack&Obama&1961\\\hline 2&Michelle&Obama&1964\\\hline `foo'&Barack&Obama&1961\\\hline\end{tabular}$$  We indicate with the double vertical line the fact that this table corresponds to a function whose domain is $K$.

\end{example}

\begin{lemma}\label{induced morphisms}

Let $\pi\taking U\to\DT$ denote a type specification, let $(C_1,\sigma_1)$ and $(C_2,\sigma_2)$ denote simple schemas on $\pi$, and let $f\taking (C_2,\sigma_2)\to (C_1,\sigma_1)$ denote a morphism of simple schemas.  There is an induced map on record sets $f^*\taking \Gamma(\sigma_1)\to\Gamma(\sigma_2)$.

\end{lemma}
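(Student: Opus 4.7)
The plan is to build $f^\ast$ in two stages: first produce a canonical map $\tilde f\taking U_{\sigma_2}\to U_{\sigma_1}$ upstairs, and then use the universal property of $U_{\sigma_2}$ as a fiber product to turn a section of $\pi_{\sigma_1}$ into a section of $\pi_{\sigma_2}$.

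First I would construct $\tilde f$. By definition $U_{\sigma_2}=U\cross_\DT C_2$ is equipped with two projections, one to $U$ and one to $C_2$, whose composites to $\DT$ (via $\pi$ and $\sigma_2$ respectively) agree. Since $\sigma_1\circ f=\sigma_2$ by the hypothesis that $f$ is a morphism of simple schemas, the composite $C_2\To{f} C_1\To{\sigma_1}\DT$ equals $\sigma_2$, and so the pair of maps $U_{\sigma_2}\to U$ and $U_{\sigma_2}\To{} C_2\To{f} C_1$ is compatible over $\DT$. The universal property of $U_{\sigma_1}=U\cross_\DT C_1$ then supplies a unique map $\tilde f\taking U_{\sigma_2}\to U_{\sigma_1}$ making the evident square commute, and the pasting lemma for pullbacks identifies this square
$$\xymatrix{U_{\sigma_2}\ar[r]^{\tilde f}\ar[d]_{\pi_{\sigma_2}}\ullimit&U_{\sigma_1}\ar[d]^{\pi_{\sigma_1}}\\C_2\ar[r]_f&C_1}$$
as itself a pullback.

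Next, given a record $r\in\Gamma(\sigma_1)$, I would define $f^\ast(r)$ as the unique dashed arrow in
$$\xymatrix{C_2\ar@/^1pc/[rrd]^{r\circ f}\ar@/_1pc/[rdd]_{\id_{C_2}}\ar@{-->}[rd]|-{f^\ast(r)}&&\\&U_{\sigma_2}\ar[r]^{\tilde f}\ar[d]_{\pi_{\sigma_2}}\ullimit&U_{\sigma_1}\ar[d]^{\pi_{\sigma_1}}\\&C_2\ar[r]_f&C_1}$$
produced by the universal property of the pullback $U_{\sigma_2}$. To invoke that property I need the outer square to commute, i.e.\ $f\circ\id_{C_2}=\pi_{\sigma_1}\circ r\circ f$, which is immediate from $\pi_{\sigma_1}\circ r=\id_{C_1}$. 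The resulting $f^\ast(r)\taking C_2\to U_{\sigma_2}$ then satisfies $\pi_{\sigma_2}\circ f^\ast(r)=\id_{C_2}$ by construction, so it is a section and hence an element of $\Gamma(\sigma_2)$. Setting $f^\ast(r)\setto f^\ast(r)$ defines the desired function $f^\ast\taking \Gamma(\sigma_1)\to\Gamma(\sigma_2)$.

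There is no serious obstacle here; everything reduces to the universal property of the fiber product, used twice. The only thing worth noting is the underlying pointwise description that results: for $c\in C_2$, writing $r(f(c))=(u,f(c))\in U\cross_\DT C_1$, one has $\pi(u)=\sigma_1(f(c))=\sigma_2(c)$, so the pair $(u,c)$ lies in $U\cross_\DT C_2=U_{\sigma_2}$, and $f^\ast(r)(c)=(u,c)$. This concrete formula makes it transparent that $f^\ast(r)$ lands in $U_{\sigma_2}$, projects to $c$, and depends functorially on $f$, which would be useful in the next lemma extending this to a functor.
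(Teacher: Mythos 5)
Your argument is correct and is essentially the paper's own proof: both identify the square over $f$ as a pullback via the pasting lemma (using that the outer rectangle and the square over $\sigma_1$ are pullbacks), and then apply the universal property of $U_{\sigma_2}$ to the pair $(r\circ f,\id_{C_2})$ to produce the section $f^*(r)$. The explicit construction of $\tilde f$ and the pointwise formula are harmless elaborations of steps the paper leaves implicit.
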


\begin{proof}

Consider the diagram $$\xymatrix{U_{\sigma_2}\ar[r]\ar[d]_{\pi_2}\ullimit &U_{\sigma_1}\ar[r]\ar[d]_{\pi_1}\ullimit&U\ar[d]_\pi\\C_2\ar[r]^f\ar@/_1pc/[rr]_{\sigma_2}&C_1\ar[r]^{\sigma_1}&\DT.}$$  Note that the left hand square is a fiber product square.  This follows by applying basic category theory (specifically the ``pasting lemma" for fiber products; see \cite{Mac}) to the fact that the right hand square and the big rectangle are fiber product squares.  We must show that a section $r_1\taking C_1\to U_{\sigma_1}$ of $\pi_1$ induces a section $r_2\taking C_2\to U_{\sigma_2}$ of $\pi_2$, because this assignment will constitute $f^*\taking\Gamma(\sigma_1)\to\Gamma(\sigma_2)$.

Suppose given $r_1$ with $\pi_1\circ r_1=\id_{C_1}$.   We have a map $r_1\circ f\taking C_2\to U_{\sigma_1}$ and a map $\id_{C_2}\taking C_2\to C_2$ such that $f\circ\id_{C_2}=f=\pi_1\circ(r_1\circ f)$.  By the universal property, these two maps define a map $r_2\taking C_2\to U_{\sigma_2}$ such that, in particular $\pi_2\circ r_2=\id_{C_2}$.  This is the desired section of $\pi_2$.

\end{proof}

Given a morphism $f\taking\sigma_2\to\sigma_1$ of simple schemas, the function $f^*\taking\Gamma(\sigma_1)\to\Gamma(\sigma_2)$ defined in the above lemma is said to be {\em induced} by $f$.

\begin{definition}\label{def:morphism of tables}

Let $\pi\taking U\to\DT$ be a type specification, and let $(K_1,C_1,\sigma_1,\tau_1)$ and $(K_2,C_2,\sigma_2,\tau_2)$ denote tables.  A {\em morphism of tables} $\varphi\taking\tau_1\to\tau_2$ consists of a pair $(g,f)$, where $g\taking K_1\to K_2$ is a function and $f\taking (C_2,\sigma_2)\to(C_1,\sigma_1)$ is a morphism of simple schema such that the diagram of sets \begin{align}\label{dia:morphism of tables}\xymatrix{K_1\ar[r]^{\tau_1}\ar[d]_g&\Gamma(\sigma_1)\ar[d]^{f^*}\\K_2\ar[r]_{\tau_2}&\Gamma(\sigma_2)}\end{align} commutes, where $f^*\taking\Gamma(\sigma_1)\to\Gamma(\sigma_2)$ is the function induced by $f$.

\end{definition}

\begin{example}\label{ex:morphism of tables}

Let us continue with Example \ref{ex:table}, except for a slight renaming of objects: $C_1\setto C, \sigma_1\setto \sigma, K_1\setto K,$ and $\tau_1\setto \tau$.  Let $C_2=\{\tn{`First', `Last'}\}$ and let $\sigma_2$ send both elements to the data type $\Strings\in\DT$; thus $\Gamma(\sigma_2)=\Strings\cross\Strings$.  

Let $K_2=\{5,6,`bar'\}$ and $\tau_2$ be the assignment \begin{align*}5&\mapsto \tn{(Barack; Obama)}\\6 &\mapsto\tn{(Michelle; Obama)}\\`bar' &\mapsto\tn{(George; Bush)}.\end{align*}  

A morphism of tables $\varphi\taking\tau_1\to\tau_2$ should consist of a map $g\taking K_1\to K_2$ and a map $f^*\taking\Gamma(C_1)\to\Gamma(C_2)$.   We have an obvious map of simple schema $f\taking C_2\to C_1$, namely $\tn{`First'}\mapsto\tn{`First name'}$ and $\tn{`Last'}\mapsto\tn{`Last name'}$.  Then $f^*\taking\Gamma(\sigma_1)\to\Gamma(\sigma_2)$ is just the projection $\Strings\cross\Strings\cross\ZZ\to\Strings\cross\Strings$.

Now, to define a morphism of tables $\varphi\taking\tau_1\to\tau_2$, our choice of $g$ must send both of the records $(\tn{Barack; Obama; 1961})$ in $\tau_1$ to the record $(\tn{Barack; Obama})$ and send the record $(\tn{Michelle; Obama; 1964})$ to the record $(\tn{Michelle; Obama})$.   There is a unique such morphism $\phi$ in this case. 

For a variety of reasons, there {\em does not} exist a morphism of tables $\tau_2\to\tau_1$.

\end{example}

\begin{remark}\label{rem:table integrity}

The morphism of tables in Example \ref{ex:morphism of tables} has a common form.  As in the example, a morphism of tables often is composed of a projection (in the columns) together with an inclusion (in the rows).  The requirement that the square (\ref{dia:morphism of tables}) in Definition \ref{def:morphism of tables} commutes is simply the requirement that morphisms preserve the integrity of the data.

\end{remark}

\subsection{The category of tables}

We have now defined tables and morphisms between tables.  Given morphisms depicted $$\xymatrix{K_1\ar[d]\ar[r]^{\tau_1}&\Gamma(\sigma_1)\ar[d]\\K_2\ar[d]\ar[r]^{\tau_2}&\Gamma(\sigma_2)\ar[d]\\K_3\ar[r]^{\tau_3}&\Gamma(\sigma_3)}$$ it is easy to see how composition is defined.  It is also easy to understand the identity morphism on a table $\tau\taking K\to\Gamma(C)$.  Thus we have a category.

\begin{definition}\label{def:category of tables}

Let $\pi\taking U\to\DT$ denote a type specification.  The category whose objects are tables $K\to\Gamma(\sigma)$ and whose morphisms are commutative squares as in Definition \ref{def:morphism of tables} is called {\em the category of tables on $\pi$} and is denoted $\Tables^\pi$, or simply $\Tables$, if $\pi$ is understood.

\end{definition}

\begin{example}

Suppose $\pi\taking U\to\DT$ is as in Example \ref{ex:schema}.  Suppose that $C=\{c_1,c_2\}$ and $C'=\{c_1'\}$, and that $\sigma\taking C\to\DT$ and $\sigma'\taking C'\to\DT$ are the unique maps such that $\Gamma(\sigma)=\ZZ\cross\ZZ$ and $\Gamma(\sigma')=\ZZ$.  Let $K$ and $K'$ be any two sets and $\tau\taking K\to\Gamma(\sigma)$ and $\tau'\taking K'\to\Gamma(\sigma')$ be any two tables.  

For a morphism $\tau_1\to\tau_2$ in the category of tables, we are allowed any kind of function between  key sets $K\to K'$, but the only permitted maps $\ZZ\cross\ZZ\too\ZZ$ are the two projections, because they are the only maps which are induced by morphisms of simple schema.

\end{example}

\begin{definition}

Let $\pi\taking U\to\DT$ denote a type specification and let $\sigma\taking C\to\DT$ denote a simple schema.  The {\em category of tables on $\sigma$ of type $\pi$}, denoted $\Tables^\pi_\sigma$ is the category whose objects are tables $\tau\taking K\to\Gamma(\sigma)$ and whose morphisms are triangles $$\xymatrix@=9pt{K_1\ar[drrr]^{\tau_1}\ar[dd]_g\\&&&\Gamma(\sigma)\\K_2\ar[urrr]_{\tau_2}}$$ denoted by $g\taking\tau_1\to\tau_2$.  

\end{definition}

\subsection{Relational tables}

The most common formulation of databases used today is the relational model, invented by E.F. Codd (see \cite{Cod}).  It is based on the theory of mathematical logic, and more specifically on relations.  One can find a modern treatment of the subject in \cite{Dat}.  We define a relation in Definition \ref{def:relational table} as a type of table, where the map $\tau\taking K\to\Gamma(\sigma)$ is required to be an injection.

\begin{definition}\label{def:relational table}

Let $\pi\taking U\to\DT$ denote a type specification, and let $\sigma\taking C\to\DT$ denote a simple schema on $\pi$.  A {\em relation on $\sigma$} is a table $\tau\taking K\to\Gamma(\sigma)$ for which $\tau$ is an injective function.  

A morphism of relations is a morphism of tables, for which the source and target tables are relations.  That is, the {\em category of relations}, denoted $\Rel^\pi$ is the full subcategory of $\Tables^\pi$ spanned by the relations.  Similarly, given a simple schema $\sigma$, the {\em category of relations on $\sigma$} is the full subcategory of $\Tables^\pi_\sigma$ spanned by the relations.  As usual the superscript $\pi$ can be dropped if it is understood.

There is a functor $\Rel\to\Tables$ and a functor $\Rel_\sigma\to\Tables_\sigma$, both of which are simply inclusions of full subcategories.

\end{definition}

\section{Constructions and formal properties of Tables}\label{sec:constructions for tables}

\setcounter{subsection}{1}

Our definition for the category of tables (Definition \ref{def:category of tables}) is sensible because objects are tables and morphisms are data-preserving maps.  In this section we show that category-theoretic operations on tables correspond to operations on databases, such as joins and other queries.  Fix a type specification $\pi\taking U\to\DT$ for the remainder of the section.  We will drop $\pi$ as a superscript in this section; for example the category $\mcS^\pi$ of simple schema on $\pi$ will be denoted simply by $\mcS$.

We sometimes refer to the underlying keys or underlying simple schema of a table, so we record these trivial constructions in a remark.

\begin{remark}\label{rem:forget}

There is a forgetful functor $\Tables\to\Sets$ given by sending a table $\tau\taking K\to\Gamma(\sigma)$ to the key set $K$ and a morphism of tables to the underlying map of keys.  There is another forgetful functor $\Tables\to\mcS\op$ which sends the table  $\tau$ to its simple schema $\sigma$ and a morphism $\varphi=(g,f)$ of tables to the underlying morphism of simple schema $f$. 

\end{remark}

\begin{lemma}\label{final object}

There exists a final object and an initial object in $\Tables$.    

\end{lemma}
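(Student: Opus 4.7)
The plan is to exhibit candidates for both universal objects and verify their universal properties directly from Definition \ref{def:morphism of tables}. For the final object, I propose $F = (\{*\}, \emptyset, \sigma_\emptyset, \tau_\emptyset)$, the single-key table over the empty schema; because a record on the empty schema is an empty section of the empty domain bundle, $\Gamma(\sigma_\emptyset)$ is itself a singleton and $\tau_\emptyset$ is forced to be the unique map $\{*\} \to \{*\}$. For the initial object, I propose $I = (\emptyset, C_I, \sigma_I, \tau_I)$ with empty key set and a suitable simple schema $(C_I, \sigma_I)$, with $\tau_I$ the empty function.

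To verify that $F$ is final, let $T = (K_T, C_T, \sigma_T, \tau_T)$ be arbitrary. A morphism $T \to F$ consists of a map $g\taking K_T \to \{*\}$, necessarily the constant map, together with a schema morphism $f\taking (\emptyset, \sigma_\emptyset) \to (C_T, \sigma_T)$, necessarily the empty function $\emptyset \to C_T$ (which is vacuously order-preserving and commutes over $\DT$; this is the observation that the empty schema is initial in $\mcS$). The square (\ref{dia:morphism of tables}) then commutes automatically because $\Gamma(\sigma_\emptyset)$ is a singleton, so both composites factor through it. Hence a morphism $T \to F$ exists and is unique.

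For $I$ initial, the choice $K_I = \emptyset$ makes the key-set map $\emptyset \to K_T$ automatically unique and renders the commutativity square of (\ref{dia:morphism of tables}) vacuous, so the universal property reduces to requiring that, for every $(C_T, \sigma_T)$, there be a unique schema morphism $(C_T, \sigma_T) \to (C_I, \sigma_I)$; equivalently, that $(C_I, \sigma_I)$ be a terminal object of $\mcS \cong (\bD \down \DT)$. This is the main obstacle of the proof: an arbitrary $\sigma_T\taking C_T \to \DT$ need not respect any preassigned order on $\DT$, so although the natural candidate $C_I = \DT$ with $\sigma_I = \id_\DT$ forces $f = \sigma_T$ as a map of underlying sets, order-preservation of $f$ requires justification. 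Pinning down this terminal schema---perhaps by invoking the flexibility of Remark \ref{rem:order} to relax the order condition, or by a careful choice of total ordering on $\DT$---is the crux of the initial-object half of the lemma; once it is in hand, the universal property of $I$ follows from the observations above.
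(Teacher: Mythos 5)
Your final-object half is complete and coincides with the paper's: a singleton key set over the empty simple schema, with $\Gamma(\sigma_\emptyset)$ a singleton so that the square of Definition \ref{def:morphism of tables} commutes for free. Your reduction of the initial-object half is also the paper's: taking $K_I=\emptyset$ makes the key-set component and the commutativity constraint vacuous, so everything rests on exhibiting a terminal object of $\mcS$. The gap is that you never do this. You name $(\DT,\id_\DT)$ as the candidate, observe that order-preservation of the forced map $f=\sigma_T$ is the sticking point, and defer it as ``the crux''; as written, the proposal therefore does not establish that an initial object exists.

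That said, the obstacle you isolated is genuine, and the paper's own proof steps over it by simply declaring $\id_\DT\taking\DT\to\DT$ to be the final object of $\mcS$. With Definition \ref{def:category of schema} read literally this fails as soon as $\DT$ has two distinct elements: the only candidate morphism $(C_T,\sigma_T)\to(\DT,\id_\DT)$ over $\DT$ is $\sigma_T$ itself, and a single simple schema such as a three-element chain with types $(`\Str\',`\ZZ\',`\Str\')$ is monotone for no total order on $\DT$, since it would force $`\Str\'\leq`\ZZ\'\leq`\Str\'$. So your second proposed repair (choosing the order on $\DT$ carefully) cannot succeed; your first one does: drop the ordering requirement as Remark \ref{rem:order} explicitly allows (equivalently, pass to the symmetric variant of Remark \ref{rem:Grandis}), whereupon $(\DT,\id_\DT)$ is terminal --- provided one also assumes $\DT$ finite, since otherwise it is not an object of $\mcS\iso(\bD\down\DT)$ at all. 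In short, your proposal is incomplete, but the missing step sits exactly on a defect in the lemma's stated proof, and a correct write-up must say explicitly which convention it adopts to close it.
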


\begin{proof}

One checks immediately that if we take $K$ to be a terminal object in $\Sets$ (i.e. any set $K$ with cardinality 1) and $\sigma$ to be the inital object $\emptyset\to\DT$ in $\mcS$, then there is exactly one table with these as its underlying keys and simple schema, and this table is the terminal object in $\Tables$.

One also checks immediately that if we take $K=\emptyset$ to be the initial object in $\Sets$ and $\sigma=\id_{\DT}\taking\DT\to\DT$ to be the final object in $\mcS$, then there is exactly one table with these as its underlying keys and simple schema, and this table is the initial object in $\Tables$.

\end{proof}

Certain colimits exist in $\Tables$; namely colimits of diagrams that are constant in the underlying simple schema.

\begin{construction}

Let $\tau_1\taking K_1\to\Gamma(\sigma)$ and $\tau_2\taking K_2\to\Gamma(\sigma)$ be two tables with the same simple schema.  By taking the disjoint union of $K_1$ and $K_2$ we get a new table $\tau\taking K_1\amalg K_2\to\Gamma(\sigma)$.  This query is called UNION ALL in SQL.  

We can also take the (non-disjoint) union of these two tables, if we know how they overlap.  That is, if there is some set $K$ with maps $g_1\taking K\to K_1$ and $g_2\taking K\to K_2$ in such a way that $\tau_1\circ g_1=\tau_2\circ g_2$, then we can obtain a new table $\tau\taking K_1\amalg_KK_2\to\Gamma(\sigma)$.  This query is called UNION in SQL.

\end{construction}

We will see that limits in the category of tables correspond to generalized joins.

\begin{proposition}\label{finite limits exist}

All finite limits exist in $\Tables$.

\end{proposition}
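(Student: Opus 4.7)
The plan is to build the limit using the two forgetful functors from Remark \ref{rem:forget}: namely $\Tables\to\Sets$ sending a table to its set of keys, and $\Tables\to\mcS\op$ sending a table to its simple schema. A table is equivalent data to a triple consisting of a set $K$, an object $\sigma$ of $\mcS\op$, and a map $\tau\taking K\to\Gamma(\sigma)$, where $\Gamma\taking\mcS\op\to\Sets$ is the record-set functor made functorial by Lemma \ref{induced morphisms}.

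Given a finite diagram $D\taking I\to\Tables$ with $D(i)=(\tau_i\taking K_i\to\Gamma(\sigma_i))$, I would construct the candidate limit as follows. Let $K_L:=\lim_I K_i$ in $\Sets$, which exists since $\Sets$ has all limits. Let $\sigma_L:=\colim_I\sigma_i$ in $\mcS$, which is the limit of the $\sigma_i$ in $\mcS\op$; existence of finite colimits in $\mcS\iso(\bD\down\DT)$ must first be verified. The structure map $\tau_L\taking K_L\to\Gamma(\sigma_L)$ is then produced from the compatible family of compositions $K_L\to K_i\To{\tau_i}\Gamma(\sigma_i)$ via the identification $\Gamma(\sigma_L)\iso\lim_I\Gamma(\sigma_i)$, i.e., the claim that $\Gamma$ carries finite colimits in $\mcS$ to finite limits in $\Sets$.

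Two technical points require verification. First, finite colimits exist in $\mcS$: since $\DT$ is discrete, one computes a colimit in $(\bD\down\DT)$ by taking the colimit of the underlying column sets together with the induced map to $\DT$; any compatible total ordering may be chosen on the resulting column set, which is harmless by Remark \ref{rem:order}. Second, $\Gamma$ converts colimits to limits: the identification $\Gamma(\sigma)\iso\prod_{c\in C}\pi^\m1(\sigma(c))$ shows that a record on $\sigma_L$ is precisely the data of a compatible family of records on the $\sigma_i$, which is exactly an element of $\lim_I\Gamma(\sigma_i)$.

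Once these two points are in hand, the universal property of $\tau_L$ is routine: any cone $\{\tau'\to\tau_i\}_{i\in I}$ in $\Tables$ determines a cone $\{K'\to K_i\}$ of key sets, yielding a unique $K'\to K_L$, and a co-cone $\{\sigma_i\to\sigma'\}$ in $\mcS$, yielding a unique $\sigma_L\to\sigma'$; commutativity of the square from Definition \ref{def:morphism of tables} follows from the universal properties. The main obstacle will be the second technical point, that $\Gamma$ sends colimits in $\mcS$ to limits in $\Sets$; this is the heart of the argument, and it is what makes the categorical product of tables recover the SQL cross join and the pullback recover the SQL join over a common sub-schema, as the discussion preceding the proposition suggests.
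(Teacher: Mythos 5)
Your proof is correct and rests on exactly the same engine as the paper's: the limit of the key sets sits over the colimit of the simple schemas, and the whole thing works because $\Gamma$ carries colimits in $\mcS$ to limits in $\Sets$ --- your identification $\Gamma(\sigma)\iso\prod_{c\in C}\pi^\m1(\sigma(c))$ together with the universal property of the colimit of column sets is precisely the paper's verification that the map (\ref{eqn:limit schema}) is a bijection. The packaging differs: the paper reduces to the existence of a terminal object (Lemma \ref{final object}) plus fiber products, and then only has to construct the pushout $\sigma_1\amalg_\sigma\sigma_2$ of a span of simple schemas; you construct the limit of an arbitrary finite diagram directly, which buys a uniform statement at the cost of having to verify that \emph{all} finite colimits (coproducts, coequalizers) exist in $\mcS$, not just pushouts. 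Both routes hit the same mild awkwardness, namely that colimits of finite \emph{ordered} sets do not literally exist in $\bD$; the paper silently takes the colimit of underlying sets, while you at least flag the issue and appeal to Remark \ref{rem:order}, which is the honest thing to do. Your closing remark that the product of tables is the cross join is consistent with the paper's observation that the columns of the join are the union of the columns and the keys are the fiber product of the key sets.
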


\begin{proof}

It suffices (see, for example, \cite[p. 30]{MM}) to show that $\Tables$ has a terminal object and is closed under taking fiber products; the first of these facts was shown in Lemma \ref{final object}.  For the second, suppose we have a diagram \begin{align}\label{eqn:limit}\xymatrix{K_1\ar[d]\ar[r]^{\tau_1}&\Gamma(\sigma_1)\ar[d]^{f_1^*}\\K\ar[r]^\tau&\Gamma(\sigma)\\K_2\ar[r]^{\tau_2}\ar[u]&\Gamma(\sigma_2)\ar[u]_{f_2^*}}\end{align} in $\Tables$, where $\sigma\taking C\to\DT$ and $\sigma_i\taking C_i\to\DT$ for $i=1,2$ are simple schemas.  As indicated, the maps $\Gamma(\sigma_i)\to\Gamma(\sigma)$ are induced by morphisms of simple schema $f_i\taking\sigma\to\sigma_i$, for $i=1,2$.

Consider the simple schema $$(\sigma_1\amalg_\sigma\sigma_2)\taking C_1\amalg_CC_2\too\DT$$ induced by taking the colimit of the column sets.  We would like to show that the natural function \begin{eqnarray}\label{eqn:limit schema}\Gamma(\sigma_1\amalg_\sigma\sigma_2)\too\Gamma(\sigma_1)\cross_{\Gamma(\sigma)}\Gamma(\sigma_2)\end{eqnarray} is a bijection.  

Let us first calculate the set $\Gamma(\sigma_1\amalg_\sigma\sigma_2)$.  It is the set of all sections $r$ of the map $\pi'$ in the diagram $$\xymatrix@=30pt{(\sigma_1\amalg_\sigma\sigma_2)^\m1(U)\ullimit\ar[r]\ar[d]^{\pi'}&U\ar[d]^\pi\\C_1\amalg_CC_2\ar@{..>}@/^1pc/[u]^r\ar[r]_-{\sigma_1\amalg_\sigma\sigma_2}&\DT.}$$  To give such a section is to give, for each $c_1\in C_1$ an element of $\pi^\m1(\sigma_1(c_1))$, and for each $c_2\in C_2$ an element of $\pi^\m1(\sigma_2(c_2))$, in such a way that for all $c\in C$, the induced elements in $\pi^\m1(\sigma_i(f_i(c)))$ are the same for $i=1,2$.  This is precisely the data needed for a unique element of the set $\Gamma(\sigma_1)\cross_{\Gamma(\sigma)}\Gamma(\sigma_2)$; this proves the claim that the map in (\ref{eqn:limit schema}) is a bijection.

It now follows that the fiber product of Diagram (\ref{eqn:limit}) is the table $$\tau_1\cross_\tau\tau_2\taking K_1\cross_KK_2\too\Gamma(\sigma_1\amalg_\sigma\sigma_2)$$ obtained by taking the fiber product of sources and targets in (\ref{eqn:limit}), and the induced map between them.

\end{proof}

Proposition \ref{finite limits exist} gives the formula for the join of two tables over a third.  As one sees from the construction, the columns of the join are the union of the columns of the given tables, and the key set is the fiber product of the key sets of the given tables.

\begin{lemma}

Let $\sigma\taking C\to\DT$ denote a simple schema.  The category $\Tables_\sigma$ of tables on $\sigma$ is closed under small limits and colimits.

\end{lemma}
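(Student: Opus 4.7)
The plan is to observe that $\Tables_\sigma$ is naturally isomorphic to the slice category $\Sets_{/\Gamma(\sigma)}$, and then to invoke the standard fact that $\Sets_{/X}$ has all small limits and colimits for any set $X$. Indeed, an object of $\Tables_\sigma$ is by definition a function $\tau\taking K\to\Gamma(\sigma)$, and a morphism $g\taking\tau_1\to\tau_2$ is a function $g\taking K_1\to K_2$ with $\tau_2\circ g=\tau_1$; this is literally the data of an object and a morphism in $\Sets_{/\Gamma(\sigma)}$. So it suffices to verify completeness and cocompleteness of this slice.

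For colimits, I would argue directly. Given a diagram $D\taking J\to\Tables_\sigma$ with $D(j)=(\tau_j\taking K_j\to\Gamma(\sigma))$, take $K\setto\colim_{j\in J}K_j$ in $\Sets$. The $\tau_j$ form a cocone on $UD$ with vertex $\Gamma(\sigma)$ (where $U\taking\Tables_\sigma\to\Sets$ forgets $\tau_j$), so they induce a unique map $\tau\taking K\to\Gamma(\sigma)$. A direct check shows that $\tau\in\Tables_\sigma$ is the colimit of $D$; equivalently, the forgetful functor $U$ creates all small colimits.

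For limits, the forgetful functor no longer suffices (it fails for disconnected diagrams such as the empty one), so I would instead exhibit a terminal object and closure under fiber products, which together yield all small limits by the standard theorem (cf.\ the argument in Proposition \ref{finite limits exist} plus an equalizer step). The identity $\id_{\Gamma(\sigma)}\taking\Gamma(\sigma)\to\Gamma(\sigma)$ is terminal in $\Tables_\sigma$: for any $\tau\taking K\to\Gamma(\sigma)$ the map $\tau$ itself is the unique morphism to $\id_{\Gamma(\sigma)}$. Fiber products are computed by taking the fiber product of underlying key sets in $\Sets$ and equipping it with the evident map to $\Gamma(\sigma)$; the universal property follows immediately from that of the fiber product in $\Sets$. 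To extend from finite to arbitrary small limits, one either repeats this argument for arbitrary products (the product of a family $\{\tau_j\taking K_j\to\Gamma(\sigma)\}_{j\in J}$ is the iterated fiber product $\prod_{j\in J,\Gamma(\sigma)}K_j$) or invokes the general slice-category construction.

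The only real subtlety is the limit step: one might naively hope the forgetful functor creates limits, but it only creates connected limits. Once the fiber-product description replaces the set-theoretic product, everything reduces to routine checks of universal properties in $\Sets$, so I do not foresee a substantive obstacle.
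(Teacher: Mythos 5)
Your proof is correct, and on the limit side it is actually more careful than the paper's. The paper's argument is one line: take the limit or colimit of the underlying diagram of key sets (i.e.\ the image under the forgetful functor of Remark \ref{rem:forget}), observe that the resulting set carries a natural map to $\Gamma(\sigma)$, and check the universal property. For colimits this is exactly your argument: the forgetful functor $\Sets_{/\Gamma(\sigma)}\to\Sets$ creates all small colimits. For limits, however, the paper's recipe as literally stated works only for connected diagrams, which is precisely the subtlety you flag: the limit of the empty diagram of key sets is a singleton with no canonical map to $\Gamma(\sigma)$, and the product of two tables would come out as $K_1\cross K_2$ rather than $K_1\cross_{\Gamma(\sigma)}K_2$. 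Indeed the paper's own Example \ref{ex:initial and final over sigma} gives $\id_{\Gamma(\sigma)}$ as the final object of $\Tables_\sigma$, whose key set is $\Gamma(\sigma)$ and not a point, so the forgetful functor does not preserve (hence cannot create) arbitrary small limits. The intended reading is presumably that one takes the limit in $\Sets$ of the diagram \emph{augmented by the cone point} $\Gamma(\sigma)$ --- the standard description of limits in a slice category, which does give the right answer --- but that is not what ``the underlying diagram of key sets'' says. Your route (identify $\Tables_\sigma$ with $\Sets_{/\Gamma(\sigma)}$, create colimits along the forgetful functor, and assemble limits from the terminal object $\id_{\Gamma(\sigma)}$, fiber products, and wide fiber products or equalizers) is the clean way to close that gap, at the cost of a slightly longer argument; the paper's version buys brevity but, read literally, proves the statement only for connected limit shapes.
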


\begin{proof}

The category of sets is closed under small limits and colimits.  To take the limit or colimit of a diagram $X\taking I\to\Tables_\sigma$, simply take the limit or colimit (respectively) of the underlying diagram of key sets -- see Definition \ref{rem:forget}.  This set comes with a natural map to $\Gamma(\sigma)$, and one shows easily that it is the limit or colimit (respectively) of $X$.

\end{proof}

\begin{example}\label{ex:initial and final over sigma}

Let $\sigma\taking C\to\DT$ denote a simple schema.  The initial and final objects in $\Tables_\sigma$ are $\emptyset\to\Gamma(\sigma)$ and $\id_{\Gamma(\sigma)}\taking\Gamma(\sigma)\to\Gamma(\sigma)$, respectively.

\end{example}

\begin{construction}

Let $\tau\taking K\to\Gamma(\sigma)$ be a table with simple schema $\sigma\taking C\to\DT$, and let $C'\ss C$ be a subset of its column set.  There is an induced table $$\tau|_{C'}\taking K\to\Gamma(\sigma|_{C'}).$$  In SQL this construction is called the {\em projection} of $\tau$ onto the subset $C'\ss C$ of columns.

\end{construction}

Using the projection query, one can realize a SELECT query as a limit of databases.  

\begin{construction}

Let us construct the SELECT query.  One begins with a table $\tau\taking K\to\Gamma(\sigma)$ with simple schema $\sigma\taking C\to\DT$, from which to select.  Let $f\taking C'\ss C$ be a subset of its columns, and let $\sigma'=\sigma|_{C'}\taking C'\to\DT$ be the restricted simple schema.  One may select from $\tau$ all records whose restriction to $C'$ is a member of some list.  We encode this list as a table $\tau'\taking K'\to\Gamma(\sigma')$ on $\sigma'$.  

In order to select from $\tau$ all records whose restriction to $C'$ is in the table $\tau'$, take the limit of the diagram $$\xymatrix{K\ar[r]^\tau\ar[d]_{f^*\circ\tau}&\Gamma(\sigma)\ar[d]^{f^*}\\\Gamma(\sigma')\ar[r]^\id&\Gamma(\sigma')\\K'\ar[r]_{\tau'}\ar[u]^{\tau'}&\Gamma(\sigma').\ar[u]_{\id}}$$  This limit is the desired SELECT query.  

\end{construction}

\begin{example}\label{ex:select for tables}

Let $\tau\taking K\to\Gamma(\sigma)$ be the table from Example \ref{ex:table}.  To select all instances for which the first name is Barack, let $C'=\{\tn{`First Name'}\}$.  Let $\tau'$ denote the one-row table $$\begin{tabular}{|l||l|}\hline K'&`First Name'\\\hline\hline k' & Barack\\\hline\end{tabular}$$  Both $\tau$ and $\tau'$ have a canonical map to the terminal table on $C'$, the table with one column (`First Name') and with a row for each element of $\Strings$.  Of course, this terminal table is too big to write down, but we do not need it.  The fiber product is easily computed to be the table $$\begin{tabular}{|l||l|l|l|}\hline K&`First Name'&`Last Name'&`BYear'\\\hline\hline 1&Barack&Obama&1961\\\hline  `foo'&Barack&Obama&1961\\\hline\end{tabular}$$

\end{example}

We conclude this section by a quick remark on the category-theoretic properties of the relational tables.

\begin{remark}

Relations behave much like ordinary tables.  Limits exist in $\Rel$ and $\Rel_\sigma$.  The functor $\Rel\to\Tables$ preserves limits, and the functor $\Rel_\sigma\to\Tables_\sigma$ preserves limits but {\em does not} preserve colimits.

We take the viewpoint that the ``correct" way to take a colimit of a diagram $X\taking I\to\Rel_\sigma$ is to pass to the diagram $I\to\Tables_\sigma$ and take its colimit instead.  This claim, in particular, says that sometimes UNION ALL is more appropriate than UNION is.  Since UNION ALL is not legal in the strict relational database theory (or it would be the same as UNION), our viewpoint could be seen as controversial to purists of the relational model.

\end{remark}

\section{Schemas and databases}\label{sec:schemas and databases}

A relational database is a set of relations, together with a system of keys and foreign keys which link the relations together.  The definition of relations themselves is, of course, quite mathematically precise.  However, the precise way in which these relations are allowed to be linked together is rarely written down as a mathematical structure in its own right, either in research papers or textbooks (we could not find it in \cite{Dat} or \cite{EN}, for example).  For example, ER diagrams are exemplified or even defined, but not as a mathematical object (like relations are).  There are exceptions, such as \cite[2.1]{RW}, but as far as we know, these definitions are not actually the ones used, either by practitioners or by theorists.  

In this section we will define simplicial databases in a rigorous way (see Definition \ref{def:database objects}).  Although examples will be plentiful, they will never stand in for precise definitions.  We will also define morphisms of databases, thus making explicit the idea of ``data-preserving maps."  Providing a precise definition of the category of databases may be useful to database theorists, as well as to people interested in studying mathematical informatics.

\subsection{Schemas}\label{subsec:schemas}

Roughly, a simplicial set is a picture that can be drawn with vertices, edges, solid triangles, solid tetrahedra, and solid ``higher-dimensional tetrahedra."  For any integer $n\geq 0$, an $n$-dimensional solid tetrahedron, or {\em $n$-simplex}, is the ``diagonal triangle" shape in $\RR^{n+1}$ given by the algebraic equation $x_1+x_2+\cdots+x_{n+1}=1$ and the inequalities $x_i\geq 0$ for $1\leq i\leq n+1$.  To draw with these shapes is to connect various tetrahedra together along their faces (or subfaces).  For example, one could connect four triangles together along various faces to obtain an empty tetrahedron, the boundary of the 3-simplex.

Simplicial sets are a fundamental tool in algebraic topology, and are important in many other fields within mathematics, such as combinatorial commutative algebra.   See \cite{Fri} or \cite{GJ} for details.

A database is a system of tables which are connected together via foreign keys.  This information is part of the schema for the database.  In our formulation, we keep track of this information using (something akin to) simplicial sets as our schema.  Tables are connected together when the corresponding simplices are connected.  

We use a slight variant of simplicial sets, which we will define in Definition \ref{def:schema}.  Namely, since columns can only take entries in a given data type, we must keep track of this information.  For this reason, the simplicial sets we use as schema have labeled vertices, where each label is an element of $\DT$.  We do not define schemas exactly this way, however, because a more generalizable way to phrase it may be useful for future generalizations.

\begin{remark}\label{rem:Grandis}

As mentioned in Remark \ref{rem:order}, some prefer the columns of each table in a database to be unordered, whereas we have chosen to consider them as an ordered set.  Simply using symmetric simplicial sets, a variant of simplicial sets in which vertices are unordered, will solve any such issue.  See \cite{Gra} for details on symmetric simplicial sets.

\end{remark}

\begin{definition}\label{def:schema}

Let $\bD$ denote the category of finite ordered sets, let $\pi\taking U\to\DT$ be a type specification, and let $$\mcS\iso(\bD\down\DT)$$ denote the category of simple schema on $\pi$ (see Definition \ref{def:simple schema} and Remark \ref{rem:bD down DT}).  We define {\em the category of schema on $\pi$}, denoted $\Sch^\pi$ to be the category whose objects are functors $X\taking\mcS\op\to\Sets$ and whose morphisms are natural transformations of functors.

Let $X\in\Sch^\pi$ denote a schema.  Given a simple schema $\sigma\taking C\to\DT$, the {\em $\sigma$-simplices} of $X$ are the elements of the set $X(\sigma)$, and we write $X_\sigma$ to denote $X(\sigma)$.

\end{definition}

\begin{remark}\label{rem:sch and fin}

Given a category $\mcC$, the category whose objects are functors $\mcC\op\to\Sets$ and whose morphisms are natural transformations of functors is called {\em the category of presheaves on $\mcC$} and denoted $\Pre(\mcC)$.  It is a common mathematical construction which ``formally adds all colimits to $\mcC$."  That is, $\Pre(\mcC)$ is closed under taking colimits, and for any functor $\mcC\to\mcD$ to a category $\mcD$ which is closed under taking colimits, there is a unique colimit-preserving functor $\Pre(\mcC)\to\mcD$ over $\mcC$.  See, for example, \cite[I.5.4]{MM}.

Thus, we have $\Sch^\pi=\Pre(\mcS^\pi)$.  Since $\mcS^\pi$ signifies the category of ways to set up columns of a tables, $\Pre(\mcS^\pi)$ is the category of ways to glue such things together.

\end{remark}

\begin{remark}\label{rem:labeled}

The category of (augmented) simplicial sets is the category $\Pre(\bD)$.  The only difference between it and $\Pre(\mcS^\pi)\iso\Pre(\bD\down\DT)$ is that each simplex in $\Sch^\pi$ has labeled vertices, whereas simplices in $\Pre(\bD)$ do not.  In the introduction to this section we described simplicial sets in terms of tetrahedra.  After making the necessary modifications, we see that a schema is constructed by gluing together labeled tetrahedra along their faces, where we only allow these tetrahedra to be glued if their labels match.

If $X$ is a schema, we sometimes refer to the simplices of its underlying simplicial set as simplices of $X$.  Thus, the $n$-simplices of $X$ is the union of all $\sigma$-simplices of $X$, where $\sigma\taking C\to\DT$ is a simple schema with cardinality $\card(C)=n+1$.  That is, we write $$X_n=\coprod_{\{\sigma\taking C\to\DT| \card(C)=n+1\}}X_\sigma.$$  There is a classifying map $s\taking X_0=\amalg_{a\in\DT}(X_a)\to\DT$ which sends all of $X_a$ to $a$, for each $a\in\DT$.

\end{remark}

One of the best features of the schema we are presenting here is their geometric nature, as described in the first paragraph of this section.  Unfortunately, Definition \ref{def:schema} does not make the geometry explicit at all.  Hopefully the next few examples will help make it more clear.

\begin{example}\label{ex:simplices}

Let $\sigma\taking C\to\DT$ denote a simple schema.  It naturally defines a schema $X=\Delta^\sigma$ as the functor which sends a simple schema $\sigma'\taking C'\to\DT$ to the set $X_{\sigma'}=\Hom_\mcS(\sigma',\sigma)$.  If $C$ has $n+1$ elements, one visualizes $\Delta^\sigma$ as an $n$-dimensional tetrahedron whose vertices are labeled by elements in the image of $\sigma$.  

This is not just a heuristic: there is a {\em geometric realization} functor $Re:\Sch\to\Top$ which realizes every schema as a topological space in a natural way, and behaves as we have described for simplices $\Delta^\sigma$.

As an example, suppose $C$ has two elements and their images under $\sigma$ are $a,b\in\DT$.  We imagine $\Delta^\sigma$ as a line segment, whose vertices are labeled $a$ and $b$.  If $C'$ has three elements and $\sigma'$ sends two of them to $a$ and one of them to $b$, we imagine $\Delta^{\sigma'}$ as a filled-in triangle, whose vertices are labeled $a,a,$ and $b$.  The figures we have imagined are the images of $\sigma$ and $\sigma'$ under $Re$.

\end{example}

\begin{definition}\label{def:simplex}

Let $\sigma\in\mcS$ denote a simple schema.  The schema $\Delta^\sigma\in\Sch$ defined in Example \ref{ex:simplices} is called {\em the $\sigma$-simplex} and, as a functor $\mcS\op\to\Sets$, is said to be {\em represented by $\sigma$}.

\end{definition}

\begin{example}

We have mentioned that every object in $\Sch^\pi$ can be obtained by gluing together simplices.  This is proven in \cite[2.15.6]{Bor1}.  Let us explain how we would construct the union $X$ of two edges along a common vertex.  Suppose that the common vertex is labeled $b$ and the other vertices are labeled $a$ and $c$.  The schema $X$ is obtained as the colimit of the diagram $$\Delta^{(a,b)}\from\Delta^{(b)}\to\Delta^{(b,c)}$$ taken in $\Sch^\pi$.

We will now write down this schema explicitly as a presheaf on $\mcS^\pi$, i.e. as a functor $X\taking(\bD\down\DT)\op\to\Sets$.  Given $\sigma\taking C\to\DT$, we let $X_\sigma$ be a single element if the image of $\sigma$ is contained in $\{a,b\}$ or contained in $\{b,c\}$.  Otherwise we take $X_\sigma$ to be the empty set.

\end{example}

\begin{example}\label{ex:discrete}

A basic example of a schema is that of a set of labeled vertices with no edges or higher simplices connecting them.  This is obtained as a coproduct of $0$-simplices (see Remark \ref{rem:sch and fin}), and it is called a {\em discrete schema}.  

\end{example}

\subsection{Sheaves on a schema}

\begin{definition}

Let $X\in\Sch^\pi$ denote a schema.  A {\em subschema of $X$} consists of a schema $X'\in\Sch^\pi$ such that for every $\sigma\in\mcS^\pi$ we have $X'_\sigma\ss X_\sigma$.  The subschemas of $X$ form a category $\Sub(X)$, in which there is a morphism $X''\to X'$ in $\Sub(X)$ if and only if $X''$ is a subschema of $X'$.  

\end{definition}

We will soon be discussing colimits in the category $\Sub(X)$.  One should note that $\Sub(X)$ is particularly nice, in that the colimit of any diagram $D\taking I\to\Sub(X)$ is the smallest subschema $X'\ss X$ which contains $D(i)$ for all $i\in I$.  In the language of lattices or locales, one writes $\colim(D)=\bigvee_{i\in I}D(i)$.  See \cite[1.3]{Bor3}.

\begin{definition}

We define a {\em sheaf on $X$} to be a functor $\mcK\taking\Sub(X)\op\to\Sets$ such that, for every diagram $D\taking I\to\Sub(X)$, the natural map $$\mcK(\colim(D))\too\lim(\mcK(D))$$ is an isomorphism.  That is, $\mcK$ must send colimits of subschema to corresponding limits of sets.

A {\em morphism of sheaves on $X$} is a natural transformation of functors $\Sub(X)\op\to\Sets$.  Let $\Shv(X)$ denote the category of sheaves on $X$.

\end{definition}

\begin{remark}\label{rem:sheaves}

Category theory experts will recognize $\Shv(X)$ as the category of sheaves on a certain Grothendieck site (the locale of subobjects of $X$).  It is well known that $\Shv(X)$ is therefore closed under small limits and colimits.  Moreover, there is an adjunction $$\Adjoint{Sh}{\Pre(X)}{\Shv(X)}{}$$ for which the right adjoint is the forgetful functor and the left adjoint is called {\em sheafification}.  Roughly, one sheafifies a presheaf on a schema by replacing its value on each union of simplices by the fiber product of its values on those simplices.  See \cite{MM} for details.

\end{remark}

\begin{example}\label{ex:sheaves}

For any schema $X$, there is an object $\emptyset\in\Sub(X)$, which is the colimit of the empty diagram on $\Sub(X)$.  Hence if $\mcK$ is to be a sheaf on $X$, one must have $\mcK(\emptyset)\iso\{*\}$.

If $X$ is a discrete schema (see Example \ref{ex:discrete}), then $X$ is the coproduct its $0$-simplices.  Thus, if $\mcK$ is to be a sheaf on $X$, we must have $$\mcK(X)=\prod_{x\in X_0}\mcK(x).$$

\end{example}

\begin{example}\label{ex:sheaf on edge}

Suppose that $X\in\Sch^\pi$ is the schema $\Delta^{(`\Str\',`\ZZ\')}$, which looks like this: $$\xymatrix@1{~^{`\Str\'}\!\bullet\hspace{-.1cm}\ar@{-}@<-.2ex>[r]&\hspace{-.2cm}\bullet^{`\ZZ\'}.}$$  The category $\Sub(X)$ is a partially ordered set with five objects: $\emptyset$, $\bullet^{`\Str\'}$,$\bullet^{`\ZZ\'}$, $(\bullet^{`\Str\'},\bullet^{`\ZZ\'})$, and $X$ itself; $\Sub(X)$ has inclusions as morphisms.  

A sheaf $\mcK\in\Shv(X)$ assigns a set to each of these five objects, and functions to each inclusion.  However, by Example \ref{ex:sheaves}, it must assign to $\emptyset$ the terminal set, $\mcK(\emptyset)=\singleton$, and it must assign to $(\bullet^{`\Str\'},\bullet^{`\ZZ\'})$ the product $\mcK(\bullet^{`\Str\'})\cross\mcK(\bullet^{`\ZZ\'})$.  Thus, to specify a sheaf, we need only specify two values, and one morphism, namely $\mcK(X)\to\mcK(\bullet^{`\Str\'})\cross\mcK(\bullet^{`\ZZ\'})$.  

For example we may choose on objects the assignments $\mcK(X)=\{4,cc,10\}$, $\mcK(\bullet^{`\Str\'})=\{1,2\}$, and $\mcK(\bullet^{`\ZZ\'})=\{x,y,z\}$; this implies $\mcK((\bullet^{`\Str\'},\bullet^{`\ZZ\'}))$ is isomorphic to $\{1x,1y,1z,2x,2y,2z\}$.  Any function from $\{4,cc,10\}$ to this six element set, say $4\mapsto 1x, cc\mapsto 2z, 10\mapsto 2z$, defines the restriction maps in our sheaf $\mcK$.  These restriction maps can be thought of as ``foreign keys."

\comment{A diagram depicting $\Sub(X)$.

$$\xymatrix@=10pt{&~^{`\Str'}\!\bullet\hspace{-.1cm}\ar@{-}@<-.2ex>[rr]&\ar[dd]&\hspace{-.2cm}\bullet^{`\ZZ'}\\\\&&\{\bullet^{`\Str'},\bullet^{`\ZZ'}\}\ar[dll]\ar[drr]\\\bullet^{`\Str'}\ar[drr]&&&&\bullet^{`\ZZ}\ar[dll]\\&&\emptyset}\mapsto$$

}

\end{example}

\begin{definition}\label{def:non-deg}

Given a schema $X\in\Sch^\pi$, we have been working with the category $\Sub(X)$ of subschemas of $X$.  There is a related category, called {\em the category of nonempty non-degenerate simple schemas over $X$} and denoted $\ND(X)$, whose objects are monomorphisms $\Delta^\sigma\inj X$ in $\Sch^\pi$, where $\sigma\taking C\to\DT$ is a schema with $C\neq\emptyset$ (see Example \ref{ex:simplices}), and whose morphisms are commutative triangles.  

Every simplex in a schema has a unique underlying non-degenerate simplex (of which it is the degeneracy), so one can define a functor $\ND\taking\Sch^\pi\to\Cat$.

\end{definition}

Since every injection $\Delta^\sigma\inj X$ is in particular a subschema, there is an obvious functor $$\ND(X)\to\Sub(X).$$  This induces an adjunction $\Adjoint{}{\Pre(\ND(X))}{\Pre(\Sub(X)).}{}$  No nontrivial unions exist in $\ND(X)$, so this adjunction becomes $$\Adjoint{L}{\Pre(\ND(X))}{\Shv(\Sub(X)),}{R}$$ where $\Pre(\ND(X))$ is the category of presheaves $\ND(X)\op\to\Sets$.  See \cite[C.1.4.3]{Joh} for more details on this type of construction.

\begin{proposition}\label{prop:non-deg}

Let $X\in\Sch^\pi$ be a schema, and let $\ND(X)$ denote the category of non-degenerate nonempty simple schema over $X$.  The adjunction $$\Adjoint{L}{\Pre(\ND(X))}{\Shv(\Sub(X)),}{R}$$ is an equivalence of categories.

\end{proposition}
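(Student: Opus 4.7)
The plan is to show that $R$ is fully faithful and essentially surjective; since $L\dashv R$, this forces both the unit and the counit of the adjunction to be isomorphisms and yields the desired equivalence. The whole argument rests on one key geometric lemma, which I would establish first: for every subschema $Y\in\Sub(X)$, the evident functor $\ND(Y)\to\Sub(X)$ sending $(\Delta^\sigma\inj Y)$ to the subschema $\Delta^\sigma\ss X$ has colimit $Y$ in $\Sub(X)$. Since colimits in $\Sub(X)$ are unions of subschemas, this amounts to an Eilenberg--Zilber-type assertion that every simplex of $Y$ is a degeneracy of a unique non-degenerate simplex of $Y$ (cf.\ Definition~\ref{def:non-deg}), and hence lies in the image of some $\Delta^{\sigma'}\inj Y$ in $\ND(Y)$.

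Granting the lemma, fully faithfulness of $R$ reduces to a reconstruction formula. For $\mcK\in\Shv(\Sub(X))$, applying the sheaf axiom to the colimit $Y\iso\colim_{\ND(Y)}\Delta^\sigma$ gives $\mcK(Y)\iso \lim_{\ND(Y)\op}\mcK(\Delta^\sigma)=\lim_{\ND(Y)\op}(R\mcK)(\Delta^\sigma)$, i.e.\ $\mcK$ is the right Kan extension of $R\mcK$ along $\ND(X)\inj\Sub(X)$. Naturality then shows every morphism of sheaves is in the image of $R$, so $R$ is fully faithful (equivalently, the counit $LR\mcK\to\mcK$ is an isomorphism). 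For essential surjectivity, I would define, for $P\in\Pre(\ND(X))$, a presheaf $\mcK_P\in\Pre(\Sub(X))$ by $\mcK_P(Y)\setto\lim_{\ND(Y)\op}P(\Delta^\sigma)$, and check two things: (i) $R\mcK_P\iso P$, which follows because $\ND(\Delta^\sigma)$ has $\id_{\Delta^\sigma}$ as a terminal object so the limit collapses; and (ii) $\mcK_P$ is a sheaf, which follows by showing that for any colimit $Y\iso\colim_i Y_i$ in $\Sub(X)$, the category $\ND(Y)$ is itself the appropriate colimit of the $\ND(Y_i)$, so that the defining limit formula for $\mcK_P$ converts subschema colimits into the required set-theoretic limits.

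The main obstacle is the key lemma together with the sheaf axiom for $\mcK_P$; both depend on controlling the interaction of non-degenerate simplices with unions of subschemas. The crucial input is that any non-degenerate simplex of a union $Y_1\cup Y_2$ already occurs as a non-degenerate simplex of some $Y_i$, which is exactly the phenomenon behind the remark in the paper that ``no nontrivial unions exist in $\ND(X)$.'' Once this simplicial input is in hand the remainder is a routine exchange of limits and colimits of sets, and the equivalence $L\dashv R$ drops out.
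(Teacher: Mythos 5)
Your proposal is correct, and it supplies exactly the content that the paper leaves as an ``easy exercise'': the paper's proof merely asserts that the two composites of $L$ and $R$ are (isomorphic to) the identities, which is equivalent to your showing that $R$ is fully faithful and essentially surjective. Your key lemma --- that every $Y\in\Sub(X)$ is the union of its nonempty non-degenerate simplices, so that the sheaf condition yields the reconstruction formula $\mcK(Y)\iso\lim_{\ND(Y)\op}\mcK(\Delta^\sigma)$ --- is precisely the input the paper gestures at in Definition \ref{def:non-deg} (every simplex has a unique underlying non-degenerate simplex) and in the remark that no nontrivial unions exist in $\ND(X)$, so your argument is the intended one, just actually carried out.
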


\begin{proof}

It is an easy exercise to show that the composition $L\circ R$ is equal to the identity on $\Pre(\ND(X))$ and that $K\circ L$ is canonically isomorphic to the identity on $\Shv(\Sub(X))$.  

\end{proof}

Proposition \ref{prop:non-deg} says that one does not have to worry about sheaves: the category $\Shv(X)$ is equivalent to a category of functors (without ``sheaf" requirements).

\begin{lemma}\label{lemma:f^*}

Let $\pi\taking U\to\DT$ denote a type specification and let $f\taking X\to Y$ denote a morphism of schema on $\pi$.  There is an adjunction $$\Adjoint{f^*}{\Shv(\Sub(Y))}{\Shv(\Sub(X))}{f_*}$$ defined as follows for sheaves $\mcK_X\in\Shv(\Sub(X))$ and $\mcK_Y\in\Shv(\Sub(Y))$.  For any $U\in\Sub(X)$ we take $$f^*\mcK_Y(U)\setto \mcK_Y(f(U)),$$ where $f(U)\in\Sub(Y)$ is the image of $U$ in $Y$.  For any $V\in\Sub(Y)$ we take $$f_*\mcK_X(V)\setto \mcK_X(f^\m1(V)),$$ where $f^\m1(V)$ is the preimage of $V$ in $X$.

\end{lemma}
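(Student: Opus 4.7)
The plan is to derive the sheaf-level adjunction $f^* \dashv f_*$ from the underlying image/preimage adjunction on subschemas. As a first step, I would verify that the components $f_\sigma\taking X_\sigma\to Y_\sigma$ of the natural transformation $f$ induce well-defined maps $f\taking\Sub(X)\to\Sub(Y)$ and $f^\m1\taking\Sub(Y)\to\Sub(X)$ by the pointwise formulas $f(U)_\sigma=f_\sigma(U_\sigma)$ and $f^\m1(V)_\sigma=f_\sigma^\m1(V_\sigma)$, and that these are monotone functors between posets fitting into an adjunction $f\dashv f^\m1$, i.e.\ $f(U)\ss V$ iff $U\ss f^\m1(V)$. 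In particular, $f$ preserves all joins and $f^\m1$ preserves both joins and meets (the latter since preimage along any set map commutes with arbitrary unions and intersections on each simplex level).

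Next I would check that $f^*$ and $f_*$ actually land in sheaves. For a diagram $D\taking I\to\Sub(X)$ with colimit $U=\colim D$, the left adjoint property gives $f(U)=\colim_{i\in I}f(D(i))$ in $\Sub(Y)$; hence
$$(f^*\mcK_Y)(U)=\mcK_Y(f(U))=\lim_{i\in I}\mcK_Y(f(D(i)))=\lim_{i\in I}(f^*\mcK_Y)(D(i))$$
since $\mcK_Y$ is a sheaf, so $f^*\mcK_Y$ satisfies the sheaf axiom. For $f_*$ the same argument works: $f^\m1$ preserves colimits in $\Sub(Y)$ (being itself a left adjoint, to the ``direct image with compact supports'' on subobject lattices), so the sheaf condition on $\mcK_X$ transfers to $f_*\mcK_X$. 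Functoriality in $\mcK_X$ and $\mcK_Y$ is immediate from functoriality of $\mcK_X(-)$ and $\mcK_Y(-)$ on morphisms in the subschema posets.

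For the adjunction itself, I would build the unit and counit from those of the subschema-level adjunction $f\dashv f^\m1$. Given $\beta\taking\mcK_Y\to f_*\mcK_X$, define $\alpha\taking f^*\mcK_Y\to\mcK_X$ componentwise by
$$\mcK_Y(f(U))\To{\beta_{f(U)}}\mcK_X(f^\m1(f(U)))\To{\tn{restr}}\mcK_X(U),$$
using the subschema unit $U\ss f^\m1(f(U))$ to get the restriction map. Conversely, given $\alpha\taking f^*\mcK_Y\to\mcK_X$, define $\beta$ componentwise by
$$\mcK_Y(V)\To{\tn{restr}}\mcK_Y(f(f^\m1(V)))=(f^*\mcK_Y)(f^\m1(V))\To{\alpha_{f^\m1(V)}}\mcK_X(f^\m1(V)),$$
using the counit $f(f^\m1(V))\ss V$. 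Naturality of each of these assignments in $U$ and $V$ follows from naturality of $\alpha$, $\beta$, and the restriction maps in $\mcK_X$, $\mcK_Y$.

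The main technical point, and the only place where one has to be careful, is verifying that these two constructions are mutually inverse. This reduces to the triangle identities for the image/preimage adjunction on $\Sub(X)\rightleftarrows\Sub(Y)$ together with the functoriality of $\mcK_X$ and $\mcK_Y$ with respect to the inclusions $U\ss f^\m1(f(U))$ and $f(f^\m1(V))\ss V$; writing out the pasted diagrams and using that a restriction along an inclusion composed with itself is just the restriction along the composite, the two round trips collapse to identity natural transformations. Naturality of the bijection in both variables then follows formally, completing the adjunction.
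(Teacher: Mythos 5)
Your proof is correct, but it takes a genuinely different route from the paper's. Both arguments ultimately rest on the Galois connection $f\dashv f^\m1$ between the posets $\Sub(X)$ and $\Sub(Y)$; you exploit it by writing down the transposition $\alpha\leftrightarrow\beta$ explicitly via the unit $U\ss f^\m1(f(U))$ and the counit $f(f^\m1(V))\ss V$, and then checking that the two round trips collapse by the triangle identities. The paper instead argues formally: it observes that $f^*$ preserves colimits, reduces the desired isomorphism $\Hom(f^*(rY'),T)\iso\Hom(rY',f_*T)$ to the case of representable sheaves, computes $f^*(rY')\iso r(f^\m1(Y'))$ from the same Galois connection, and concludes by Yoneda. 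The paper's argument is shorter but leans on Proposition \ref{prop:non-deg}, on colimit-preservation, and on the fact that every sheaf is a colimit of representables; yours is longer but entirely elementary, and it has the additional merit of actually verifying that $f^*\mcK_Y$ and $f_*\mcK_X$ satisfy the sheaf condition (via preservation of joins by the levelwise image and preimage operations), a point the paper's proof passes over in silence. Both approaches are valid.
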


\begin{proof}

Colimits of presheaves are computed objectwise, and it follows from Proposition \ref{prop:non-deg} that the functor $f^*$, defined above, preserves colimits.  Hence, it suffices to show that for any representable sheaf $rY'=\Hom_{\Sub(Y)}(-,Y')\in\Shv(\Sub(Y))$ and sheaf $T\in\Shv(\Sub(X))$, one has an isomorphism $$\Hom(f^*(rY'),T)\iso^?\Hom(rY',f_*T).$$  To begin, note that for any $U\in\Sub(X)$ one has a chain of natural isomorphisms \begin{align*}f^*(rY')(U)\setto (rY')(f(U))&\iso\Hom_{\Sub(Y)}(f(U),Y')\\&\iso\Hom_{\Sub(X)}(U,f^\m1(Y'))\iso r(f^\m1(Y'))(U).\end{align*}  That is, $f^*(rY')\iso r(f^\m1(Y')).$  By another chain of natural isomorphisms, we have \begin{align*}\Hom(f^*(rY'),T)&\iso\Hom(r(f^\m1(Y')),T)\\&\iso T(f^\m1(Y'))\\&=:f_*T(Y')=\Hom(rY',f_*T).\end{align*}  This proves the lemma.

\end{proof}

\subsection{Simplicial databases}

We think of a schema as a way of organizing the data in a database.  Before we say what a database is, let us give one more example of a schema.  In some sense it will be the fundamental example of a schema; however, it should not really be thought of as a way to organize the data, but as the meaning of the data itself.

\begin{example}\label{ex:universal record}

Let $\pi\taking U\to\DT$ denote a type specification, and let $\mcS=\mcS^\pi$ denote the category of simple schema on $\pi$.  Let $\Gamma^\pi\taking\mcS\op\to\Sets$ denote the functor which assigns to a schema $\sigma\taking C\to\DT$ the set $\Gamma^\pi(\sigma)$ of records on $\sigma$ (see Definition \ref{def:records}).

By Lemma \ref{induced morphisms}, a map $\sigma\to\sigma'$ induces a function $\Gamma^\pi(\sigma')\to\Gamma^\pi(\sigma)$, so $\Gamma^\pi$ is indeed a contravariant functor.  By definition we can consider $\Gamma^\pi$ as a schema on $\pi$ and write $\Gamma^\pi\in\Sch^\pi$.

We call $\Gamma^\pi$ {\em the universal record on $\pi$,} for reasons which will be clear soon.  If the type specification $\pi\taking U\to\DT$ is obvious from context, we may denote $\Gamma^\pi$ simply by $\Gamma$.  

\end{example}

\begin{definition}\label{def:universal sheaf}

Let $\pi\taking U\to\DT$ denote a type specification, let $\Gamma^\pi$ denote the universal record on $\pi$, and let $X\in\Sch^\pi$ denote a schema on $\pi$.  The {\em universal sheaf on $X$ of type $\pi$} is the sheaf $\mcU^\pi$ whose value on a subschema $X'\ss X$ is the set $$\mcU^\pi(X')=\Hom_{\Sch^\pi}(X',\Gamma^\pi).$$  Each element of $\mcU^\pi(X')$ is called a {\em record on $X'$ of type $\pi$}.  If $\pi$ is clear from context, we may write $\mcU$ to denote $\mcU^\pi$.

Now let $X,Y\in\Sch^\pi$ be schema and let $\mcU_X$ and $\mcU_Y$ denote the universal sheaf of type $\pi$ on $X$ and $Y$, respectively.  A map of schema $f\taking Y \to X$ induces a morphism $\mcU_f\taking f^*\mcU_X\to\mcU_Y$ as follows.  Let $Y'\ss Y$ denote an object in $\Sub(Y)$; then composing with $f$ induces a natural map $$f^*\mcU_X(Y')=\Hom_{\Sch^\pi}(f(Y'),\Gamma^\pi)\too\Hom_{\Sch^\pi}(Y',\Gamma^\pi)=\mcU_Y(Y'),$$ which we denote $\mcU_f$; it is similarly defined on morphisms.

\end{definition}

\begin{definition}\label{def:database objects}

Let $\pi\taking U\to\DT$ denote a type specification.  A {\em simplicial database (\tn{or simply} database) of type $\pi$} is a triple $(X,\mcK,\tau)$ where $X\in\Sch^\pi$ is a schema of type $\pi$, $\mcK\in\Shv(X)$ is a sheaf of sets on $\Sub(X)$, and $\tau\taking\mcK\to\mcU_X$ is a morphism of sheaves on $X$ (see Definition \ref{def:universal sheaf}).  We refer to $X$ as {\em the schema}, $\mcK$ as {\em the sheaf of keys}, and $\tau$ as {\em the data} of the database $(X,\mcK,\tau)$.

\end{definition}

\begin{remark}\label{rem:internal keys}

Given a set of ways to measure objects, it often happens that we have several objects with the same measurements.  For example, we may have three green apples, or two 1999 Toyota Corollas.  In relational databases, if two objects have the same attributes, then they are taken to be the same instance.  To keep them distinct, one introduces a unique identifier, an artificial key, which becomes part of the data.  This causes problems with database integration, because the arbitrarily-chosen artificial keys in one database will generally not match with those in another.

In our definition, the keys for the data are kept separate, as the sheaf of sets $\mcK$.  Different names for the keys in no way affect the data itself and therefore do not interfere with database integration.  We say more about this in Section \ref{subsec:duplication}.

\end{remark}

\begin{example}\label{ex:database}

In Example \ref{ex:sheaf on edge}, we wrote down a sheaf $\mcK\in\Shv(X)$ on the schema $$X=\xymatrix@1{~^{`\Str\'}\!\bullet\hspace{-.1cm}\ar@{-}@<-.2ex>[r]&\hspace{-.2cm}\bullet^{`\ZZ\'},}$$ and we will continue to use it in this example.  To specify a database on $X$ of type $\pi$, we must give a morphism $\tau\taking\mcK\to\mcU^\pi$ of sheaves on $X$.

We defined the universal sheaf $\mcU_X$ of type $\pi$ on $X$ in Definition \ref{def:universal sheaf}.  We have \begin{align*}\mcU_X(X)=\mcU_X((\bullet^{`\Str\'},\bullet^{`\ZZ\'}))&=\Str\cross\ZZ\\\mcU_X(\bullet^{`\Str\'})&=\Str\\\mcU_X(\bullet^{`\ZZ\'})&=\ZZ\\\mcU_X(\emptyset)&=\singleton.\end{align*} 

To define a map $\tau\taking\mcK\to\mcU_X$, we must give maps $$\begin{array}{lll}\tau(\bullet^{`\Str\'})\taking\mcK(\bullet^{`\Str\'})\to\mcU_X(\bullet^{`\Str\'}), &&\tau(\bullet^{`\ZZ\'})\taking\mcK(\bullet^{`\ZZ\'})\to\mcU_X(\bullet^{`\ZZ\'})\end{array}$$ and $$\tau(X)\taking\mcK(X)\to\mcU_X(X)$$ that compose correctly with the restriction maps.  We arbitrarily assign $$\begin{array}{lllllll}\tau(1)&=&\tn{Barack}&&\tau(x)&=&1961\\\tau(2)&=&\tn{Michelle}&&\tau(y)&=&1946\\&&&&\tau(z)&=&1964.\end{array}$$  Now $\mcK(X)=\{4,cc,10\}$, and the restriction map sends $4\mapsto 1x$, $cc\mapsto 2z$, and $10\mapsto 2z$.  This forces $\tau(4)=\tn{(Barack; 1961)}$ and $\tau(cc)=\tau(10)=\tn{(Michelle; 1964)}$.  The other values and restriction maps for $\mcK$ are now also forced.

\end{example}

\begin{example}

In Example \ref{ex:database}, we followed the definitions very closely, perhaps to the detriment of the big ideas.  In this example, we write down how the sheaf ``looks" as a collection of tables.  

Let us first change the schema $X$ very slightly, by instead using the schema $\sigma\taking\{\tn{First, BYear}\}\to\DT$, where $\sigma(\tn{First})=\tn{`Str'}$ and $\sigma(\tn{BYear})=`\ZZ$', and now taking $X=\Delta^\sigma$.  The only difference is that we have labeled our columns by more specific attribute names.  We write $\tau(X)\taking\mcK(X)\to\mcU_X(X)$ as the table $$\tau(X)=\begin{tabular}{|l||l|l|}\hline $\mcK(X)$&First&BYear\\\hline\hline4&Barack&1961\\\hline cc&Michelle&1964\\\hline 10&Michelle&1964\\\hline\end{tabular}$$  We write $\tau(\bullet^\tn{First})$ and $\tau(\bullet^\tn{BYear})$ as the tables $$\tau(\bullet^\tn{First})=\begin{tabular}{|l||l|}\hline$\mcK(\bullet^\tn{First})$&First\\\hline\hline 1&Barack\\\hline2&Michelle\\\hline\end{tabular}\hspace{.5in} \tau(\bullet^\tn{BYear})=\begin{tabular}{|l||l|}\hline$\mcK(\bullet^\tn{BYear})$&BYear\\\hline\hline x&1961\\\hline y&1946\\\hline z&1964\\\hline\end{tabular}$$

We can consider the restriction maps $\mcK(X)\to\mcK(\bullet^\tn{First})$ and $\mcK(X)\to\mcK(\bullet^\tn{BYear})$ as foreign keys attached to the $\tau(X)$ table.  The way things are set up, this foreign key information is kept in the restriction maps of the sheaf $\mcK$.  See Example \ref{ex:sheaf on edge}.

\end{example}

\begin{definition}\label{def:database morphisms}

Let $\pi\taking U\to\DT$ denote a type specification, let $\mcX=(X,\mcK_X,\tau_X)$ and $\mcY=(Y,\mcK_Y,\tau_Y)$ denote databases of type $\pi$, and let $\mcU_X$ and $\mcU_Y$ denote the universal sheaf on $X$ and $Y$ (see Definition \ref{def:universal sheaf}).  A {\em morphism of databases}, denoted $$(f,f^\sharp)\taking\mcX\to\mcY,$$ consists of a map $f\taking Y\to X$ of schema (see Definition \ref{def:schema}) and a morphism of sheaves $f^\sharp\taking f^*\mcK_X\to\mcK_Y$ on $Y$ such that the diagram of sheaves \begin{eqnarray}\label{dia:integrity}\xymatrix{f^*\mcK_X\ar[r]^{f^*(\tau_X)}\ar[d]_{f^\sharp}&f^*\mcU_X\ar[d]^{\mcU_f}\\\mcK_Y\ar[r]_{\tau_Y}&\mcU_Y}\end{eqnarray} commutes.

The {\em category of simplicial databases on $\pi$}, whose objects are simplicial databases as defined in Definition \ref{def:database objects} and whose morphisms have just been defined, is denoted $\Data^\pi$, or simply $\Data$ if $\pi$ is understood.  Fixing a schema $X$, the {\em category of databases on $X$}, denoted $\Data_X$, is the category whose objects are databases with schema $X$ and whose morphisms are identity on $X$.  

\end{definition}

\begin{remark}\label{rem:data integrity}

A database is roughly a bunch of tables glued together by foreign key mappings.  A morphism of databases is a way to coherently assign to each table in one database, a table in another database, and a morphism between the two tables.  Recall that a morphism of tables is a ``data-preserving map" (see Definition \ref{def:morphism of tables}, Example \ref{ex:morphism of tables}, and Remark \ref{rem:table integrity}).  Thus, a morphism of databases should be thought of as a coherent system of data-preserving maps.

We might make the following definition.  A {\em morphism without integrity} is a pair $(f,f^\sharp)\taking\mcX\to\mcY$ as above, but {\em without} the requirement that diagram (\ref{dia:integrity}) commute.

\end{remark}

\begin{remark}\label{rem:data_Y}

Let $Y$ be a schema and let $\mcU_Y$ denote the universal database on $Y$.  One can identify $\Data_Y$ with the category $\Shv(Y)_{/\mcU_Y}$ of sheaves over $\mcU_Y$.  Explicitly, this is the category whose objects are arrows $\mcK\to\mcU_Y$ and whose morphisms are commutative triangles.

\end{remark}

\subsection{Relational simplicial databases}\label{Subsec:relational}

In this subsection, we present a category of relational databases as a full subcategory of the category $\Data$ of simplicial databases.  We also give an adjunction which allows one to convert a database in our sense to a relational database in a functorial way.

\begin{definition}

Let $\pi$ denote a type specification.  A simplicial database $\mcX=(X,\mcK,\tau)$ on $\pi$ is called {\em relational} if $\tau\taking\mcK\to\mcU_X$ is a monomorphism of sheaves.  The {\em category of relational simplicial databases}, denoted $\mcRel^\pi$ is the full subcategory of $\Data^\pi$ spanned by the relational simplicial databases.

\end{definition}

Note the precise similarity of this definition with Definition \ref{def:relational table}: the schema $X$ is a gluing together of simple schema $\sigma$, the sheaf $\mcU_X$ evaluated on a simplex $\Delta^\sigma\ss X$ is $\Gamma(\sigma)$, and a monomorphism of sheaves is a morphism which restricts to an injective function on each simplex.

Every function $f\taking A\to B$ between sets has an image $\im(f)\ss B$ and an injection $f^m\taking\im(f)\to B$; similarly, given a schema $X$, every morphism $f\taking \mcA\to\mcB$ of sheaves of sets on $X$ has an image sheaf denoted $\im(f)\ss\mcB$ and a monomorphism of sheaves $f^m\taking\im(f)\to\mcB$.  If $\mcX=(X,\mcK,\tau)$ is a database, we can take the image sheaf $\im(\tau)$ of $\tau\taking\mcK\to\mcU_X$, and the database $(X,\im(\tau),\tau^m)$ will be a relational simplicial database.

\begin{lemma}\label{lemma:adj rel data}

Let $\pi$ denote a type specification.  There is an adjunction $$\Adjoint{}{\Data^\pi}{\mcRel^\pi}{}$$ in which the left adjoint is given by $(X,\mcK,\tau)\mapsto(X,\im(\tau),\tau^m)$ and the right adjoint is the forgetful functor which realizes a relational simplicial database as a simplicial database.

\end{lemma}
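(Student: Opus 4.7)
The plan is to establish the adjunction via a natural bijection on hom-sets. Fixing $\mcX = (X, \mcK_X, \tau_X) \in \Data^\pi$ and a relational database $\mcY = (Y, \mcK_Y, \tau_Y) \in \mcRel^\pi$, I write the canonical epi-mono factorization as $\tau_X = \tau_X^m \circ \tau_X^e$ with $\tau_X^e \taking \mcK_X \surj \im(\tau_X)$ and $\tau_X^m \taking \im(\tau_X) \inj \mcU_X$, so that $L\mcX = (X, \im(\tau_X), \tau_X^m)$. Since $R$ is the inclusion of a full subcategory, it suffices to construct a natural bijection
$$\Hom_{\mcRel^\pi}(L\mcX, \mcY) \iso \Hom_{\Data^\pi}(\mcX, \mcY).$$

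For the forward direction, I send $(f, f^\sharp) \taking L\mcX \to \mcY$ to $(f, f^\sharp \circ f^*\tau_X^e) \taking \mcX \to \mcY$; its integrity square commutes by pasting the triangle $f^*\tau_X = f^*\tau_X^m \circ f^*\tau_X^e$ onto the integrity square of $(f, f^\sharp)$. For the reverse direction, suppose $(f, g^\sharp) \taking \mcX \to \mcY$ is given. By Lemma \ref{lemma:f^*}, $f^*$ is a left adjoint and therefore preserves colimits; in the topos $\Shv(Y)$ (Remark \ref{rem:sheaves}) this implies that $f^*$ preserves the epi-mono factorization, so $f^*\im(\tau_X) \iso \im(f^*\tau_X)$ and the epi $f^*\tau_X^e$ presents it as a quotient of $f^*\mcK_X$. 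The integrity hypothesis says $\tau_Y \circ g^\sharp = \mcU_f \circ f^*\tau_X$; since $\tau_Y$ is monic (as $\mcY$ is relational), any pair of elements of $f^*\mcK_X$ identified by $f^*\tau_X$ is also identified by $g^\sharp$, so $g^\sharp$ factors uniquely through $f^*\tau_X^e$ as $g^\sharp = f^\sharp \circ f^*\tau_X^e$ for a unique sheaf map $f^\sharp \taking f^*\im(\tau_X) \to \mcK_Y$. This $(f, f^\sharp)$ is the required morphism $L\mcX \to \mcY$, and its integrity square commutes because $\tau_X^m$ is monic.

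The two assignments are mutually inverse by the uniqueness of the factorization, and naturality in both variables follows because both constructions amount to composition with sheaf morphisms. The main technical point is verifying that $f^*$ preserves the image factorization; this is the only step that uses the topos structure of $\Shv(Y)$, and it is essentially immediate from the fact that images in a topos are coequalizers of kernel pairs and $f^*$ preserves all colimits as a left adjoint. Everything else reduces to diagram-chasing with the universal property of the image.
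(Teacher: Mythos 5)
Your proof is correct and, at bottom, rests on the same fact the paper invokes---that epi-mono (image) factorization exhibits the monomorphisms into a fixed object as a reflective subcategory---but you carry out considerably more of the argument than the paper does. The paper's entire proof is a one-sentence reduction to the adjunction between sets-with-functions and sets-with-injections, applied over $\mcU_X$; it never addresses the point you correctly isolate as the only real technical content, namely that a morphism of databases involves a change of schema and hence a pullback $f^*$, so one must know that $f^*$ carries the factorization $\tau_X=\tau_X^m\circ\tau_X^e$ to the image factorization of $f^*\tau_X$. Your hom-set bijection, the factorization of $g^\sharp$ through $f^*\tau_X^e$ using that $\tau_Y$ is monic, and the uniqueness argument are all sound. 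Two small justifications should be repaired: (i) preservation of colimits alone does not yield preservation of the image factorization, since the image is the coequalizer of the kernel pair and the kernel pair is a finite limit; the cleanest fix is to observe that $f^*$ as defined in Lemma \ref{lemma:f^*} is computed objectwise, $(f^*\mcK)(U)=\mcK(f(U))$, so it visibly preserves objectwise surjections and injections and hence the factorization. (ii) The commutativity of the integrity square for the factored morphism $L\mcX\to\mcY$ follows by cancelling the epimorphism $f^*\tau_X^e$ on the right, not from $\tau_X^m$ being monic. Neither issue affects the validity of the argument, and your write-up is in fact more complete than the proof in the paper.
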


\begin{proof}

This is a simple exercise that reduces to the fact that the image functor, which sends the category of sets and functions to the category of sets and injections, is a left adjoint to the forgetful functor.

\end{proof}

Since the forgetful functor $\mcRel^\pi\to\Data^\pi$ is fully faithful, the counit of the adjunction in Lemma \ref{lemma:adj rel data} is the identity functor on $\mcRel^\pi$.  Another way to say this is that one does not lose information when considering a relational database as a simplicial database, but one often does lose information when converting a simplicial database to a relational database.  Strictly ``more information" can be contained in a simplicial database than in a relational database.

\subsection{Tables vs. simplicial databases}

In this last subsection we present the functor $F\taking\Tables\to\Data$ which realizes a table as a simplicial database.  We will also present the ``global table" construction, which roughly takes a database and joins everything together to make one big (unnormalized!) table.

\begin{construction}\label{con:table as database}

Let $\pi\taking U\to\DT$ denote a type specification and $(K,C,\sigma,\tau)$ a table on $\pi$ (see Definition \ref{def:tables}).  Let $X=\Delta^\sigma\in\Sch^\pi$ be the associated schema, let $\mcU_X$ denote the universal database on $X$, and let $\mcK_X$ denote the constant sheaf on $\Sub(X)$ which takes each subschema to the set $K$.  Define $\tau_X\taking\mcK_X\to\mcU_X$ in the unique way such that $\tau_X(X)\taking\mcK_X(X)\to\mcU_X(X)$ is the function $\tau\taking K\to\Gamma(\sigma)$.  We are ready to assign $$F((K,C,\sigma,\tau))\setto (X,\mcK_X,\tau_X).$$

Given a map of tables $\varphi\taking(K_1,C_1,\sigma_1,\tau_1)\to(K_2,C_2,\sigma_2,\tau_2)$, we will now show that there is a canonical map of simplicial databases $(X_1,\mcK_1,\tau_1)\to(X_2,\mcK_2,\tau_2)$.  Recall from Definition \ref{def:morphism of tables} that $\varphi=(g,f)$ where $g\taking K_1\to K_2$ is a function and $f\taking\sigma_2\to\sigma_1$ is a morphism of simple schema such that Diagram (\ref{dia:morphism of tables}), rewritten for the readers convenience here: $$\xymatrix{K_1\ar[r]^{\tau_1}\ar[d]_g&\Gamma(\sigma_1)\ar[d]^{f^*}\\K_2\ar[r]_{\tau_2}&\Gamma(\sigma_2),}$$ commutes.

The morphism $f\taking\sigma_2\to\sigma_1$ of simple schema induces a morphism $\Delta^{\sigma_2}\to\Delta^{\sigma_1}$ of schema, i.e. a map $f\taking X_2\to X_1$.  The sheaf $f^*\mcK_1$ on $X_2$ is the constant sheaf with value $K_1$, so $g$ gives a map $f^\sharp\taking f^*\mcK_1\to\mcK_2$.  We will skip some details, but one can easily show that the commutativity of the Diagram (\ref{dia:integrity}) is equivalent to the commutativity of Diagram (\ref{dia:morphism of tables}), completing the construction.

\end{construction}

We can also extract a single table from a simplicial database, by looking at its global sections.  This requires a functor called $f_+$ defined in Section \ref{subsec:changing the schema}.  We include the construction here, rather than later, in order to keep like topics together, and conclude nicely with Remark \ref{rem:adj tables data}.

\begin{construction}\label{con:database as table}

Let $\mcX=(X,\mcK,\tau)$ denote a simplicial database.  Recall from Remark \ref{rem:labeled} that there is an induced classification map $s\taking X_0\to\DT$.  Assuming that $X$ has finitely many vertices, we can construct a table whose simple schema is $s$.  

To do so, we need only note that there is a unique map of schema $f\taking X\to\Delta^s$.  Indeed, given any simplex in $X$, its set of vertices classifies a unique simplex in $\Delta^s$; this defines $f$.  If we write $K=\mcK(X)=f_+\mcK(\Delta^s)$ and $t=f_+\tau_X(\Delta^s)\taking K\to\Gamma(s)$, then we are ready to construct the table $$(K,X_0,s,t)\in\Tables.$$  Its columns are given by the vertices $X_0$ of $X$; its rows are difficult to describe in general, but in specific cases are quite sensible.

\end{construction}

\begin{remark}\label{rem:adj tables data}

It is not hard to show that the two above constructions establish an adjunction $$\Adjoint{}{\Tables}{\Data}{}$$  Given a database $\mcX$, the table obtained by the right adjoint will be called the {\em global table on $\mcX$.}

\end{remark}

\section{Constructions and formal properties of Simplicial Databases}\label{sec:constructions for databases}

The point of the formalism in Section \ref{sec:schemas and databases} is to find a language in which to describe databases such that the typical operations performed when working with databases are sensible in the language.  In other words, queries of databases should make sense as categorical constructions, as they did in Section \ref{sec:constructions for tables} for tables.  

\subsection{Changing the schema}\label{subsec:changing the schema}

Let us begin with some ways that one can import data from one schema into another.  In Lemma \ref{lemma:f^*} we discussed the adjunction \begin{eqnarray}\label{dia:image sheaves}\Adjoint{f^*}{\Shv(\Sub(Y))}{\Shv(\Sub(X))}{f_*}\end{eqnarray} induced by a map of schema $f\taking Y\to X$.  Given a database $\mcX=(X,\mcK_X,\tau_X)$ on $X$ there is an induced database $(Y,f^*\mcK_X,\mcU_f\circ (f^*\tau_X))$, denoted $f^*\mcX$; see Definition \ref{def:universal sheaf} and refer to the diagram $$\xymatrix{f^*\mcK_X\ar[r]^{f^*\tau_X}\ar[dr]&f^*\mcU_X\ar[d]^{\mcU_f}\\&\mcU_Y.}$$  

A slightly more complicated construction creates a database on $X$ from a database $\mcY=(Y,\mcK_Y,\tau_Y)$ on $Y$ and a map of schema $f\taking Y\to X$.  By the adjunction (\ref{dia:image sheaves}), we have the diagram \begin{eqnarray}\label{dia:problem}\xymatrix{&\mcU_X\ar[d]\\f_*\mcK_Y\ar[r]_{f_*\tau_Y}&f_*\mcU_Y,}\end{eqnarray} but since there is no canonical map $f_*\mcK_Y\to\mcU_X$, we have not yet constructed a database on $X$.

To do so, let $f_+(\mcK_Y)$ denote the limit of Diagram (\ref{dia:problem}).  This sheaf comes with a canonical map to $\mcU_X$, which we denote $f_+\tau_Y\taking f_+\mcK_Y\to\mcU_X$.  The triple $$(X,f_+\mcK_Y,f_+\tau_Y)$$ is a database on $X$, which we denote $f_+\mcY$.  

\begin{proposition}

Let $\pi$ denote a type specification, and let $f\taking Y\to X$ be a morphism of schema of type $\pi$.  The functors $f^*$ and $f_+$ define an adjunction $$\Adjoint{f^*}{\Data_X}{\Data_Y.}{f_+}$$

\end{proposition}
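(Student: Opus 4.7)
The plan is to identify $\Data_Z$ with the slice category $\Shv(Z)_{/\mcU_Z}$ for $Z\in\{X,Y\}$ via Remark \ref{rem:data_Y}, then produce the adjunction bijection $\Hom_{\Data_Y}(f^*\mcX,\mcY)\iso\Hom_{\Data_X}(\mcX,f_+\mcY)$ by combining the universal property of the pullback defining $f_+\mcK_Y$ with the sheaf-level adjunction $(f^*,f_*)$ of Lemma \ref{lemma:f^*}.

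First I would unpack both sides. Writing $\mcX=(X,\mcK_X,\tau_X)$ and $\mcY=(Y,\mcK_Y,\tau_Y)$, a morphism $f^*\mcX\to\mcY$ in $\Data_Y$ is a map of sheaves $\phi\taking f^*\mcK_X\to\mcK_Y$ on $Y$ satisfying $\tau_Y\circ\phi=\mcU_f\circ f^*\tau_X$; a morphism $\mcX\to f_+\mcY$ in $\Data_X$ is a map of sheaves $\psi\taking\mcK_X\to f_+\mcK_Y$ on $X$ satisfying $f_+\tau_Y\circ\psi=\tau_X$. Since $f_+\mcK_Y$ is by construction the pullback $\mcU_X\times_{f_*\mcU_Y}f_*\mcK_Y$ of Diagram (\ref{dia:problem}), the universal property of the pullback says that giving $\psi$ with $f_+\tau_Y\circ\psi=\tau_X$ is equivalent to giving a single map $\beta\taking\mcK_X\to f_*\mcK_Y$ such that $f_*\tau_Y\circ\beta=\gamma\circ\tau_X$, where $\gamma\taking\mcU_X\to f_*\mcU_Y$ is the structure map in the pullback square.

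Next I would invoke the sheaf adjunction of Lemma \ref{lemma:f^*} (applied to $f\taking Y\to X$) to transpose $\beta\taking\mcK_X\to f_*\mcK_Y$ into $\phi\taking f^*\mcK_X\to\mcK_Y$. Under this correspondence the compatibility equation $f_*\tau_Y\circ\beta=\gamma\circ\tau_X$ translates via the usual zig-zag with the unit and counit into $\tau_Y\circ\phi=\mcU_f\circ f^*\tau_X$, which is exactly the compatibility cutting out $\Hom_{\Data_Y}(f^*\mcX,\mcY)$ inside $\Hom_{\Shv(Y)}(f^*\mcK_X,\mcK_Y)$. Naturality in both $\mcX$ and $\mcY$ is then inherited from the naturality of the pullback and of the sheaf-level adjunction, so the resulting bijection is the desired adjunction.

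The main obstacle, and the only step that is not a mechanical diagram chase, is verifying that the map $\gamma\taking\mcU_X\to f_*\mcU_Y$ appearing in the construction of $f_+$ really is the $(f^*,f_*)$-transpose of $\mcU_f\taking f^*\mcU_X\to\mcU_Y$ from Definition \ref{def:universal sheaf}. This comes down to unwinding the explicit description of $f^*$ and $f_*$ in Lemma \ref{lemma:f^*} together with the defining formula $\mcU_Z(Z')=\Hom_{\Sch^\pi}(Z',\Gamma^\pi)$, and once it is in hand the translation of compatibility conditions in the previous paragraph is immediate.
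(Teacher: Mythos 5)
Your proposal is correct and follows essentially the same route as the paper's proof: both reduce the claim to the sheaf-level adjunction of Lemma \ref{lemma:f^*} combined with the universal property of the limit defining $f_+\mcK_Y$, and both close by invoking Remark \ref{rem:data_Y}; you merely traverse the two steps in the opposite order. Your explicit flag that the map $\mcU_X\to f_*\mcU_Y$ in Diagram (\ref{dia:problem}) must be the $(f^*,f_*)$-transpose of $\mcU_f$ is a point the paper leaves implicit, and it is worth the check you describe.
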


\begin{proof}

Let $\mcX=(X,\mcK_X,\tau_X)$ and $\mcY=(Y,\mcK_Y,\tau_Y)$ be databases.  Giving a morphism $f^*\mcX\to\mcY$ of databases over $Y$ amounts to a giving a map $\alpha$ of sheaves making the diagram $$\xymatrix{f^*\mcK_X\ar[r]^{f^*\tau_X}\ar@{-->}[d]_\alpha&f^*\mcU_X\ar[d]^{\mcU_f}\\\mcK_Y\ar[r]_{\tau_Y}&\mcU_Y}$$ commute.  By the adjunction (\ref{dia:image sheaves}) this diagram is equivalent to the diagram $$\xymatrix{\mcK_X\ar[r]^{\tau_X}\ar@{-->}[d]_\alpha&\mcU_X\ar[d]^{\mcU_f}\\f_*\mcK_Y\ar[r]_{f_*\tau_Y}&f_*\mcU_Y,}$$ by Lemma \ref{lemma:f^*}.  Supplying a morphism $\alpha$ making this diagram commute is equivalent to supplying a morphism $\mcK_X\to f_+\mcK_Y$ over $\mcU_X$, because $f_+\mcK_Y$ is the limit of Diagram \ref{dia:problem}.  The proof now follows from Remark \ref{rem:data_Y}.

\end{proof}

\begin{definition}\label{def:image databases}

Let $\pi$ denote a type specification, and let $f\taking Y\to X$ be a morphism of schema of type $\pi$.  The functor $f^*\taking\Data_X\to\Data_Y$, defined above, is called the {\em pullback functor}, and the functor $f_+\taking\Data_Y\to\Data_X$, defined above, is called the {\em push-forward functor}.  

Given a sheaf of sets $\mcK_X$ on $X$, we also refer to $f^*\mcK_X\in\Shv(Y)$ as {\em the pullback of $\mcK_X$}, and given a sheaf of sets $\mcK_Y$ on $Y$, we also refer to $f_+\mcK_Y\in\Shv(X)$ as {\em the push-forward of $\mcK_Y$}.

\end{definition}

\begin{example}

Let $X$ and $Y$ be the schema $$X\setto \xymatrix@1{~^{`\Str\'}\!\bullet\hspace{-.1cm}\ar@{-}@<-.2ex>[r]&\hspace{-.2cm}\bullet^{`\ZZ\'},} \hspace{.5in}Y\setto \xymatrix@1{~^{`\Str\'}\!\bullet\hspace{-.1cm}\ar@{-}@<-.2ex>[r]&\hspace{-.2cm}\bullet^{`\Str\'},}$$ and let $f\taking Y\to X$ be the unique morphism of schema between them.  

By Remark \ref{rem:data_Y}, a database on $X$ is given by a morphism of sheaves $\tau_X\taking\mcK_X\to\mcU_X$, for some sheaf of sets $\mcK_X$.  We roughly think of it as a table of strings and integers, with some values not filled in.  (In fact, $\tau_X$ has more information because, for example, two keys in $\mcK(X)$ might be sent to the same key in $\mcK(\bullet^{`\Str\'})$).

The pullback database $f^*\tau_X\taking f^*\mcK_X\to\mcU_Y$ is degenerate in the sense that every row has the same string repeated in two columns.  In some sense, this is to be expected.

Now suppose that $\tau_Y\taking\mcK_Y\to\mcU_Y$ is a database on $Y$.  We roughly think of it as a table whose rows are pairs of strings.  The push-forward $f_+\tau_Y$ consists of three tables: one has two columns (strings and integers) and the other two just have one column.  The one column table of integers $f_+\tau_Y(\bullet^{`\ZZ\'})$ is empty.  The one column table of strings $f_+\tau_Y(\bullet^{`\Str\'})$ consists of those strings $S$ for which there is a row in $\tau_Y(Y)$ consisting of a repeated string $(S,S)$.  Finally, the two column table $f_+\tau_Y(X)$ consists of an element $(S,n)$ for every row $S$ in the one-column table of strings and every integer $n\in\ZZ$. 

One sees that by this example that if $f\taking Y\to X$ is not surjective, then the pushforward functor $f_+$ results in huge tables.  It is not meant to be implemented as a hash table but as a theoretical construct.

\end{example}

Given a map of schemas $f\taking Y\to X$, there is one more important way to send a database on $X$ to a database on $Y$, but only if $f$ is a monomorphism of schema.  A monomorphism of schema corresponds to the relationship often known as ``is a", in which every object of type $x$ ``is an" object of type $y$.  In this situation, there is a functor which takes as input a database of $y$'s, and produces as output a database of $x$'s with all of the $y$-information filled in, but nothing else.  The functor that accomplishes this task is denoted $f_!\taking\Data_Y\to\Data_X$ and is called ``extension by $\emptyset$," meaning that on every simplex in $X$ that is not in $f(Y)$, the value of the sheaf there is an empty table.  

To define $f_!$ rigorously, we first notice that $f^*\taking\Shv(X)\to\Shv(Y)$ not only has a right adjoint ($f_*$), but a left adjoint as well, which we also denote $f_!\taking\Shv(Y)\to\Shv(X)$.  If $f$ is a monomorphism, then every subschema $Y'\ss Y$ is sent to a subschema $f(Y')\ss X$.  

Let us define $f_!\mcU_Y$ and its canonical map to $\mcU_X$.  Every subschema $X'\ss X$ is either of the form $X'=f(Y')$ or not.  If so, we set $f_!\mcU_Y(X')=\mcU_Y(Y')=\mcU_X(X')$.  If not, we set $f_!\mcU_Y(X')=\emptyset$.  There is a canonical map $a_f\taking f_!\mcU_Y\to\mcU_X$ which is the identity map on $X'=f(Y')$ and which is $\emptyset\to\mcU_X(X')$ when $X'\not\in\im(f)$.

Now that we have a canonical map $a_f\taking f_!\mcU_Y\to\mcU_X$ in the case that $f\taking Y\to X$ is an inclusion, we can define $f_!\taking\Data_Y\to\Data_X$ to be given by $$f_!(Y,\mcK_Y,\tau_Y)\setto (X,f_!\mcK_Y,a_f\circ\tau_Y).$$  The functor $f_!$ is left adjoint to the functor $f^*\taking\Data_X\to\Data_Y$ (but $f_!$ is defined only when $f\taking Y\to X$ is an injection.)

\subsection{Nulls}\label{subsec:nulls}

Nulls do not conform with the mathematical logic that underlies the strict theoretical foundation of relational databases.  They are easy enough to deal with, however, by use of foreign keys.  That is, for each column $c\in C$ of a schema $\sigma\taking C\to\DT$ for which a table may contain a null, one creates a new schema $\sigma'$ on columns $C'=C-\{c\}$.  By an easy use of foreign keys, one considers objects classified by $\sigma$ to be also classified by $\sigma'$.  This is a way to get around the problem of nulls.  Other approaches can be found in \cite{JR}.

The same technique is done (automatically) in simplicial databases.  Over a simplex $\Delta^\sigma$, one puts objects for which the value on each column is known.  If the value on some set of columns is unknown for a certain object, it is represented as a record on the subsimplex for which it is total.  

If one so desired, he or she could implement simplicial databases so that local sections of the database (records over subschema) appeared as global sections of the database (records over the whole schema) by putting the value ``Null" in appropriate places.  From our perspective it is preferable just to leave local data as local data and not try to promote it to global data, at least for theoretical purposes.

\subsection{Duplicate records}\label{subsec:duplication}

SQL allows for a table to have the same record in two different rows.  Therefore, tables are not relations and SQL does not strictly implement relational databases.  One could argue that SQL is ``wrong" in not conforming to the theory (see \cite[p. 14]{Dat}), but perhaps the pure relational theory is overly strict; this is the position we take.  

Simplicial database allow for duplicate entries.  This should not be threatening because internal keys ensure the integrity of the data.  If $\Gamma=A\cross B\cross C$, then relations on this simple schema are subsets $K\ss\Gamma$.  In the theory of simplicial databases, we allow non-injective functions $\tau\taking K\to\Gamma$, called tables.

Philosophically, we see the relational model as ``confusing the object with its attributes."  A schema, or set of attributes, gives a set of ways to measure a collection of objects.  It is entirely possible that two objects in that collection could have the same measurements according to the schema.  In the relational model, these two objects would be {\em identified} in the sense that only one row of the table would be representing both.  From now on, the database and its users will have no choice but to consider these objects to be the same.

The only alternative to this is to introduce arbitrary identifiers.  These artificial keys are not part of the data being measured about the objects.  In our view, it is best to keep these arbitrary identifiers ``internal" to the database management system.  Among several advantages, the most obvious is database integration, in which it is important to know what aspects of the data are ``measured" and invariant, and what aspects are contrived.  We will say more about this in Section \ref{subsubsec:database integration}.

\comment{
The relational model tends to identify objects with a certain list of their attributes.  No two objects are allowed to have the same attributes in the relational model.  One can claim that in reality, no two objects have precisely the same attributes, and this is true.  However a schema does not consider every possible attribute but a small and pre-defined set of attributes.  It is quite possible that two objects will look the same to that schema. 

In the simplicial database model (or any model in which one allows non-injective functions $\tau$ as tables), the object is not confused with its list of attributes.  That is, different objects can ``look the same" to a given schema.  The schema represents attributes with which to distinguish between objects, and the relational model demands perfection on this front: if two objects are different then they must be distinguishable by the chosen schema.  Our simplicial model does not demand perfection in this sense, but loses no mathematical rigor or power; it simply gains flexibility.  As shown in Lemma \ref{lemma:adj rel data}, every relational database can be considered as a simplicial database in a functorial way, but there are strictly more simplicial databases because they allow for duplicate entries.}

\subsection{Limits and colimits of databases}

We will see shortly that limits and colimits taken in the category of simplicial databases have meaning in terms of the general theory of databases, such as joins and unions.  

\begin{theorem}\label{thm:colimits and limits}

Let $\pi\taking U\to\DT$ denote a type specification.  The category $\Data^\pi$ of databases of type $\pi$ is closed under taking small colimits and small limits.

\end{theorem}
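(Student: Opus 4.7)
The plan is to realize $\Data^\pi$ as the total category of the pseudofunctor $(\Sch^\pi)\op \to \Cat$ sending a schema $X$ to the fiber $\Data_X \iso \Shv(X)_{/\mcU_X}$ (Remark \ref{rem:data_Y}), with transition functors given by the pullbacks $f^* \taking \Data_X \to \Data_Y$ of Definition \ref{def:image databases}. Once this is set up, (co)limits in $\Data^\pi$ can be built ``one level at a time,'' first in $\Sch^\pi$ and then in a single fiber.

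First I would verify the three ingredients that make this strategy work. The base $\Sch^\pi = \Pre(\mcS^\pi)$ is a presheaf category, hence both complete and cocomplete. Each fiber $\Data_X$ is a slice of the Grothendieck topos $\Shv(X)$ over a fixed object, so it is complete (limits computed as limits of underlying sheaves together with their induced maps to $\mcU_X$) and cocomplete (colimits computed as colimits of underlying sheaves with their canonically induced maps to $\mcU_X$). Finally, for any $f \taking Y \to X$ in $\Sch^\pi$, the pullback $f^*$ preserves limits (having the right adjoint $f_+$ constructed just before Definition \ref{def:image databases}) and preserves colimits (since $f^*$ on sheaves is already a left adjoint by Lemma \ref{lemma:f^*}, and slice $f^*$ inherits this).

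I then construct the (co)limits explicitly. Given $D \taking I \to \Data^\pi$ with $D(i) = (X_i,\mcK_i,\tau_i)$ and underlying schema maps $h_\alpha \taking X_j \to X_i$ for $\alpha \taking i \to j$ in $I$, let $X = \colim_{I\op} X_i$ in $\Sch^\pi$ with structure maps $g_i \taking X_i \to X$. Then $i \mapsto g_{i,+} D(i)$ assembles into a functor $I \to \Data_X$ (using $g_j = g_i \circ h_\alpha$ and functoriality of push-forward), and its limit in $\Data_X$ supplies $\lim D$. Dually, for colimits let $X = \lim_{I\op} X_i$ with structure maps $g_i \taking X \to X_i$; the assignment $i \mapsto g_i^* D(i)$ gives a functor $I \to \Data_X$, whose colimit in $\Data_X$ is $\colim D$.

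The main obstacle I expect is bookkeeping the universal property, with the subtlety that a database morphism $\mcX \to \mcY$ reverses the schema direction to $Y \to X$, so a cone (resp.\ cocone) over $D$ with apex $\mcZ$ projects to a cocone (resp.\ cone) in $\Sch^\pi$ indexed by $I\op$. This projection factors uniquely through $X$ by the universal property of the base (co)limit, yielding a single schema map that mediates between $\mcZ$ and the candidate (co)limit. The remaining sheaf-level factorization then reduces, via the adjunction $f^* \dashv f_+$ in the limit case and via the colimit-preservation of $f^*$ in the colimit case, to an instance of the (co)limit universal property inside the single fiber $\Data_X$, where it is immediate.
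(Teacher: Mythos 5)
Your proposal is correct and follows essentially the same route as the paper: the paper likewise takes the (co)limit of the underlying schemas in $\Sch\op$ first, transports each sheaf of keys to that schema via $c_i^*$ (for colimits) or $(\ell_i)_+$ (for limits), and then takes the fiberwise (co)limit in $\Shv(\cdot)_{/\mcU}$. Your Grothendieck-construction packaging and the explicit use of $f^*\dashv f_+$ and colimit-preservation of $f^*$ just make precise the universal-property check that the paper dismisses as ``tracing through the construction.''
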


\begin{proof}

Let $I$ denote a small category and let $\mcX\taking I\to\Data$ denote an $I$-shaped diagram in $\Data=\Data^\pi$.  There is a functor $\Data\to\Sch\op$ taking a database $(A,\mcK_A,\tau_A)$ to its underlying schema $A$, and composing this functor with $\mcX$ gives a functor which we denote $X\taking I\to\Sch\op$.  For an object $i\in I$, we denote the database $\mcX(i)$ by $\mcX_i$ and write $$\mcX_i=(X_i,\mcK_i,\tau_i).$$

To define the colimit (respectively limit) of the diagram $\mcX$, we must first specify its schema.  Since $\Sch=\Pre(\mcS)$, where $\mcS$ is the category of simple schema (see Definition \ref{def:category of schema}), it is closed under colimits and limits (\cite[p. 22]{MM}); hence so is $\Sch\op$.  Let $C=\colim(X)$ (resp. $L=\lim(X)$) denote the colimit (resp. limit) of the diagram $X\taking I\to\Sch\op$.  Let $\mcU_C$ and $\mcU_L$ denote the universal databases on $C$ and $L$, respectively.

As a colimit in $\Sch\op$, the schema $C$ comes equipped with morphisms in $c_i\taking C\to X_i$ in $\Sch$, for each $i\in I$, making the appropriate diagrams commute.  There is a pullback sheaf $c_i^*\tau\taking c_i^*\mcK_i\to\mcU_C$.  If $f\taking i\to j$ is a morphism in $I$, then the map $X_j\to X_i$ in $\Sch$ induces a morphism $$c_i^*\mcK_i\to c_j^*\mcK_j$$ of pullback sheaves over $\mcU_C$ on $C$.  Let $c^*\taking I\to\Shv(C)_{/\mcU_C}$ denote the $I$-shaped diagram of these pullback sheaves over $\mcU_C$.  Define $\tau_C\taking\mcK_C\to\mcU_C$ to be the colimit of this diagram.  Then the database $$\mcC=(C,\mcK_C,\tau_C)$$ is our candidate for the colimit of the diagram $\mcX$.  It is a matter of tracing through the construction to show that $\mcC$ has the necessary universal property.

Defining the limit of $\mcX$ is similar.  As a limit in $\Sch\op$, the schema $L$ comes equipped with morphisms $\ell_i\taking X_i\to L$ in $\Sch$, for each $i\in I$, making the appropriate diagrams commute.  There is a push-forward sheaf $(\ell_i)_+\mcK_i$ on $L$, which comes equipped with a map $(\ell_i)_+\tau\taking(\ell_i)_+\mcK_i\to\mcU_L$.  If $f\taking i\to j$ is a morphism in $I$, then the map $X_j\to X_i$ in $\Sch$ induces a morphism $$(\ell_i)_+\mcK_i\to(\ell_j)_+\mcK_j$$ of push-forward sheaves over $\mcU_L$ on $L$.  Let $(\ell_+)\taking I\to\Shv(L)_{/\mcU_L}$ denote the $I$-shaped diagram of these push-forward sheaves over $\mcU_L$.  Define $\tau_L\taking\mcK_L\to\mcU_L$ to be the limit of this diagram.  Then the database $$\mcL=(L,\mcK_L,\tau_L)$$ is our candidate for the limit of the diagram $\mcX$.  Again, it is a matter of tracing through the construction to show that $\mcL$ has the necessary universal property.  

This completes the proof.

\end{proof}

\begin{remark}

The final object in the category $\Data^\pi$ of databases on $\pi\taking U\to\DT$ is the empty database (with empty schema and trivial sheaf).  The initial object $(X,\mcK,\tau)$ in $\Data^\pi$ has, as its schema $X$, a single $n$-simplex for every map $\sigma\taking\{0,1,\ldots,n\}\to\DT$; the sheaf is $\mcK=\mcU_X$, and the map $\tau\taking\mcU_X\to\mcU_X$ is the identity.

If one knows the \Cech nerve construction, one can realize the initial object in those terms, by applying the \Cech nerve functor to $\pi\taking U\to\DT$.  See \cite[3.1]{Spi} for details.

\end{remark}

\begin{corollary}\label{cor:colimits and limits}

Let $X\in\Sch$ be a schema and let $\Data_X$ denote the category of databases with schema $X$ and with morphisms which restrict to the identity on $X$.  Colimits and limits exist in $\Data_X$; in particular $\Data_X$ has an initial object and a final object.

\end{corollary}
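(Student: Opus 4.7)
The plan is to derive this corollary as an immediate consequence of Theorem \ref{thm:colimits and limits}, combined with the identification of $\Data_X$ provided by Remark \ref{rem:data_Y}. Namely, that remark tells us $\Data_X \iso \Shv(X)_{/\mcU_X}$, the slice category of sheaves on $X$ living over $\mcU_X$. Since $\Shv(X)$ is closed under small limits and colimits (Remark \ref{rem:sheaves}), so is any slice of it: colimits in $\Shv(X)_{/\mcU_X}$ are computed as colimits in $\Shv(X)$ equipped with the induced map to $\mcU_X$, and limits are computed by taking the $\Shv(X)$-limit of the given diagram adjoined with $\mcU_X$ as a cone point. This is a standard slice-category fact, so the first (and essentially only) step is to invoke it.

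Alternatively, I would verify consistency with Theorem \ref{thm:colimits and limits} by tracing its proof in the present setting. Given a diagram $\mcX\taking I\to\Data_X$, the composite $I\to\Data\to\Sch\op$ is constant at $X$ with identity transition maps, so its colimit and limit in $\Sch\op$ are both $X$, and the structure maps $c_i$ and $\ell_i$ in the proof of Theorem \ref{thm:colimits and limits} are all the identity $\id_X$. Consequently the pullback and push-forward steps there reduce to the identity functors on $\Shv(X)$, and the constructed $(X,\mcK_C,\tau_C)$ and $(X,\mcK_L,\tau_L)$ are objects of $\Data_X$. Because every cocone (resp.\ cone) in $\Data_X$ is in particular a cocone (resp.\ cone) in $\Data$, and because the requirement ``restricts to the identity on $X$'' is automatic in this situation, the universal property in $\Data_X$ follows from the one already established in $\Data$.

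For the explicit identification of the initial and final objects: the initial object of the slice $\Shv(X)_{/\mcU_X}$ is the unique morphism $\emptyset\to\mcU_X$ from the initial sheaf, yielding the database $(X,\emptyset,!)$; the final object is $\id_{\mcU_X}\taking\mcU_X\to\mcU_X$, yielding the database $(X,\mcU_X,\id)$. No real obstacle arises here; the only subtlety worth noting is simply that morphisms in $\Data_X$ are required to be the identity on $X$, which is precisely what makes the equivalence with $\Shv(X)_{/\mcU_X}$ in Remark \ref{rem:data_Y} work cleanly.
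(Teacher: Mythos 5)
Your primary argument is correct but takes a different route from the paper. The paper's own proof simply traces the construction of Theorem \ref{thm:colimits and limits}, asserts that for a non-empty diagram restricting to the identity on $X$ the resulting (co)limit again has schema $X$, and then exhibits the initial and final objects by hand (the same two objects you name). You instead invoke Remark \ref{rem:data_Y} to identify $\Data_X$ with the slice $\Shv(X)_{/\mcU_X}$ and appeal to the standard fact that slices of (co)complete categories are (co)complete; this is cleaner and fully self-contained, and it correctly handles the empty diagram in the same stroke. What the paper's route buys is an explicit link between (co)limits in $\Data_X$ and the global construction in $\Data$; what yours buys is the avoidance of a real subtlety that both the paper's proof and your ``alternative'' second paragraph gloss over: for a \emph{disconnected} index category $I$, the constant-at-$X$ diagram in $\Sch\op$ does not have (co)limit $X$ but rather a power or copower of $X$ (e.g.\ the coproduct in $\Data$ of two databases with schema $X$ has schema $X\times X$ computed in $\Sch$), so the inclusion $\Data_X\to\Data$ does not preserve such (co)limits and the universal property in $\Data_X$ cannot simply be inherited from $\Data$ as you claim there. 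Since your slice-category argument does not rely on that step, the proposal as a whole stands; just be aware that the verification in your second paragraph is only valid for connected $I$.
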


\begin{proof}

Given a non-empty diagram which restricts to the identity on a certain schema $X$, one sees by the construction of limits and colimits in the proof of Theorem \ref{thm:colimits and limits} that the limit and the colimit of that diagram will also have schema $X$.  

The limit (respectively the colimit) of the empty diagram in $\Data_X$, if it exists, is the final (resp. initial) object in $\Data_X$; we must show it does exist.  One immediately sees that the final object is $(X,\mcU_X,\id_{\mcU_X})$, and the initial object is $(X,\emptyset,\emptyset\to\mcU_X)$, where $\emptyset$ here denotes the sheaf on $X$ whose value is constantly the empty set, and where $\emptyset\to\mcU_X$ is the unique morphism of sheaves.

\end{proof}

\subsection{Projections}\label{subsec:projections}

This query is built into the theory of simplicial databases.  Given a database $(X,\mcK,\tau)$ and a subschema $X'\ss X$, we have the database $(X',\mcK|_{X'}\tau|_{X'})$ given by restricting the sheaf $\mcK$ and the map of sheaves $\tau\taking\mcK\to\mcU$ to the subschema $X'$.  One can view it as a table using Construction \ref{con:database as table}.

\subsection{Unions and insertions}

Given two databases with the same schema, one can apply the UNION query.  To do so, one keeps the same columns but takes the union of the rows.  An insertion is a special kind of union; namely it is a union of two databases on the same schema, where one of the databases consists only of a single row.

We have a few more options in simplicial databases than one does in relational databases; these differences are analogous to the difference between the UNION query and the UNION ALL query in SQL.  That is, since we allow duplicate entries (see Section \ref{subsec:duplication}), the user can decide when an object in one database is the same as an object with the same attributes stored in another database and when it is different.  Let us make all of this precise.

We can represent unions, insertions, and more by taking colimits of various diagrams of databases.  Let $\mcX=(X,\mcK,\tau)$ denote a simplicial database, and let $\mcX'=(X,\mcK',\tau')$ be a database with the same schema, $X$.  Both receive a map from the initial database on $X$, and the coproduct will be $(X,\mcK\amalg\mcK',\tau\amalg\tau')$ as desired.  (See the proof of Theorem \ref{thm:colimits and limits} for details on the colimit construction.)

The above construction gives a UNION ALL query: duplicated tuples will remain distinct.  There are two ways of having that not be the case.  The first is to simply eliminate the duplicates by converting the database to a relational database; see Lemma \ref{lemma:adj rel data}.  However, this may result in information loss if there really were two entities with the same attributes, because these duplicates will be eliminated.

The other way can occur if the user has more information about which instances in the first database correspond to instances in the second database.  This can be accomplished by having a third database $\mcX''=(X,\mcK'',\tau'')$ and maps from it to $\mcX$ and $\mcX'$.  The colimit of this diagram, $(X,\mcK\amalg_{\mcK''}\mcK',\tau\amalg_{\tau''}\tau')$, will be the union of the records in $\mcX$ with those in $\mcX'$, and will identify two records if they agree in $\mcX''$.

As mentioned above, inserting a row is a special case of taking the union of databases.

We can take much more general colimits than those mentioned above, all of which were constant in the schema.  These constructions appear to be new; perhaps they can provide useful ways to analyze and assemble data.

\subsection{Join}\label{subsec:join}

Two databases can be joined together by specifying a common sub-schema of each and ``gluing together" along that sub-schema.  If no common sub-schema is mentioned we take the initial schema, which is empty, and join along that; the result is called the natural join.  The concept of gluing is rigorously formulated as taking limits of certain diagrams in $\Sch\op$; thus the point we are making is that joining databases in the usual sense can be accomplished by taking limits in the category of simplicial databases.  Let us make all of this precise.

Recall from Theorem \ref{thm:colimits and limits} that the limit of the diagram of databases $$(X_1,\mcK_1,\tau_1)\too(X,\mcK,\tau)\fromm(X_2,\mcK_2,\tau_2)$$ has schema $X'=X_1\amalg_XX_2$.  This induces a diagram $$\xymatrix{X\ar[r]\ar[d]&X_1\ar[d]\\X_2\ar[r]&X'\lrlimit}$$ in $\Sch$.  We can thus push-forward $\mcK_1$, $\mcK$, and $\mcK_2$ to $X'$ and get a diagram of push-forward sheaves there (see Definition \ref{def:image databases}), all naturally mapping to $\mcU_{X'}$.  For typographical reasons, we leave out the fact that these are push-forwards and write the diagram $\mcK_1\to\mcK\from\mcK_2$ over $\mcU_{X'}$.  We are ready to write the limit database as $$(X_1\amalg_XX_2,\mcK_1\cross_\mcK\mcK_2,\tau'),$$ where $\tau'\taking\mcK_1\cross_\mcK\mcK_2\to\mcU_{X'}$ is the structure map.

\begin{example}

Suppose we have the two schemas pictured here: $$X_1\setto \xymatrix@1{~^\tn{`First'}\!\bullet\hspace{-.1cm}\ar@{-}@<-.2ex>[r]&\hspace{-.2cm}\bullet^\tn{`Last'}}, \hspace{.5in}X_2\setto \xymatrix@1{~^\tn{`L.Name'}\!\bullet\hspace{-.1cm}\ar@{-}@<-.2ex>[r]&\hspace{-.2cm}\bullet^\tn{`BYear'},}$$ and wish to join them together by equating `Last' with `L.Name' (both of which have the same data type, namely $\Str$).  To do so, we use the schema $X=\bullet^{`\Str'}$, which maps to each of $X_1$ and $X_2$ in an obvious way.  

Now given any databases $\mcX_1=(X_1,\mcK_1,\tau_1)$ and $\mcX_2=(X_2,\mcK_2,\tau_2)$ on $X_1$ and $X_2$, we can join them by taking the limit of the solid arrow diagram $$\xymatrix{\mcX_1\cross_\mcX\mcX_2\ar@{-->}[r]\ar@{-->}[d]&\mcX_2\ar[d]\\\mcX_1\ar[r]&\mcX}$$ where $\mcX=(X,\mcU_X,\id_{\mcU_X})$ is the final database on $X$.  The schema of the resulting database is

\begin{figure}[htbp]
\setlength{\unitlength}{3947sp}
\includegraphics{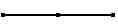}
\put(-2200,50){`First'}
\put(-1550,50){`Last'=`LName'}
\put(-200,50){`BYear'}
\end{figure}

This {\em does not} represent a table with three columns, but two tables, each with two columns, and each projecting to a common 1-column table.  However, its global table does have three columns (see Remark \ref{rem:adj tables data}).  Its records are those triples of the form (First,Last,BYear) for which there is a (First,Last) pair in $\mcX_1$ and a (Last,BYear) pair in $\mcX_2$ with matching values of Last.  This is indeed their join.

\end{example}

\begin{remark}

The ``join" we are working with here could be thought of as a combination of equi-join and outer join.  Because databases are sheaves on a schema, they do not have just one table but a system of tables, and the idea of nulls is built into the theory (see Section \ref{subsec:nulls}).  

More precisely, if $\mcX_1\to\mcX\from\mcX_2$ is a diagram of databases, the limit $\mcX'$ represents the join of $\mcX_1$ and $\mcX_2$ along a shared set of columns (those of $\mcX$).  Its schema is roughly the union of the schemas of $\mcX_1$ and $\mcX_2$.  Its global table will be the equi-join of the global tables for $\mcX_1$ and $\mcX_2$.  

The point of this remark, however, is that the new table $\mcX'$ does not only contain global information, but local information as well.  Much of the data of $\mcX_1$ (respectively $\mcX_2$) is preserved upon passage to $\mcX'$, and that which cannot be extended to global data could still be viewed globally if one uses Null values.  It is in this sense that colimits in $\DB$ are related to outer joins.

\end{remark}

When joining databases together, one first chooses a set $C$ of columns to equate.  When two distinct objects have the same $C$-attributes, then the join is ``lossy" in the sense that there will be false information in the join.  To remedy this, one must be careful to distinguish between objects, even when considered only in terms of $C$.  The following example will hopefully make this more clear.

\begin{example}

Suppose one wants to join the following two tables: $$\xymatrix@=12pt{\hbox{\begin{tabular}{|l||l|l|}\hline$\tau_1$&Title&LastName\\\hline\hline 1&Dr.&Marx\\\hline 2&Mr.&Marx\\\hline\end{tabular}}&&\hbox{\begin{tabular}{|l||l|l|}\hline$\tau_2$&FirstName&LastName\\\hline\hline A&Karl&Marx\\\hline B&Groucho&Marx\\\hline\end{tabular}}}$$ The outcome will be the following table: $$\begin{tabular}{|l|l|l|}\hline Title&FirstName&LastName\\\hline\hline Dr.&Karl&Marx\\\hline Dr.&Groucho&Marx\\\hline Mr.&Karl&Marx\\\hline Mr.&Groucho&Marx\\\hline\end{tabular}$$  This table has four entries, two of which are ``accurate," in that they describe real instances, and two of which are not.  This occurs because the relational database cannot distinguish between the two instances of the last name Marx.  

Achieving a lossless join is easy, when databases are allowed to have duplicate entries with the same attributes.  Consider the table $$\begin{tabular}{|l||l|}\hline$\tau$&LastName\\\hline\hline x&Marx\\\hline y&Marx\\\hline\end{tabular}$$ which accepts maps from both $\tau_1$ and $\tau_2$ by sending both $1$ and $A$ to $x$, and sending both $2$ and $B$ to $y$ (see Definition \ref{def:morphism of tables}).  The limit of this diagram is the table $$\begin{tabular}{|l|l|l|}\hline Title&FirstName&LastName\\\hline\hline Dr.&Karl&Marx\\\hline Mr.&Groucho&Marx\\\hline\end{tabular}$$ as desired.

\end{example}

In the example above, the table $\tau$ has two instances of the same string.  This is not superfluous because there are two people named Marx.  They are differentiated by their internal keys, but not by their attributes.  Keeping distinct objects distinct, even if they have the same attributes is very useful in practice.  It not only allows for lossless joins, but it is well-suited for database integration as well.

\subsection{Select}\label{subsec:select} 

In Example \ref{ex:select for tables}, we selected from a table $\tau$ with columns $C=\{\tn{`First Name', `Last Name', `BYear'}\}$ all instances for which the value of `First Name' was ``Barack."  This was computed as follows.  First, we made a table $\tau'$ whose column set $C'$ consisted of a single element, labeled `First Name', and filled in $\tau'$ with a single entry, `Barack'.  We might call this table the {\em selection table}.  The SELECT operation was performed by taking the fiber product $\tau\to\id_{C'}\from\tau'$, where $\id_{C'}$ denotes the table of all possible values of `First Name'.

Performing SELECT operations in a general simplicial database has the same flavor, in that it is always computed as a certain kind of fiber product.  Denote the database from which we are selecting as $\mcX=(X,\mcK_X,\tau_X)$, let $S\ss X$ denote a subschema and $\mcS=(S,\mcK_S,\tau_S)$ a relational table on $S$, to serve as the selection table.  That is, we will be selecting from $X$ all instances that have the designated $S$-attributes.  Finally, we let $1_\mcS=(S,\mcU_S,\id_{\mcU_S})$ denote the final database on the schema $S$.  The fiber product $\mcX_\mcS$ in the diagram $$\xymatrix{\mcX_\mcS\ar[r]\ar[d]\ullimit&\mcS\ar[d]\\\mcX\ar[r]&1_\mcS}$$ is the desired result.

\subsection{Deletions}\label{subsec:delete}

Deletion can be subtle.  If one deletes entries over a subschema, the action must ``cascade" up the hierarchy, deleting entries in larger schemas when they refer or point to the deleted entries.  To that end, we define the following construction.  

\begin{definition}

Suppose given a schema $X$ and a subsheaf $\mcK_1\ss\mcK$ on $X$.  Let $\ol{\mcK_1}\ss\mcK$ denote the presheaf on $X$ with $$\ol{\mcK_1}(X')\setto \{r\in\mcK(X')|\exists X''\ss X', X''\neq\emptyset, r_{X''}\in\mcK_1(X'')\}$$ for subschema $X'\in\Sub(X)$.  Here $r_{X''}$ denotes the image of $r$ under the restriction map $\mcK(X')\to\mcK(X'')$.  We call $\ol{\mcK_1}$ the {\em closure of $\mcK_1$ in $\mcK$}.

\end{definition}

Suppose now we want to delete all entries of a given type from a database.  More concretely, suppose $\mcX=(X,\mcK_X,\tau_X)$ is a database with schema $X$, that $i\taking S\ss X$ is a subschema, and that $\mcS=(S,\mcK_S,\tau_S)$ is a relational database of objects of this subtype, all of which we would like to delete from $X$.  As explained in Section \ref{subsec:select}, we can select the rows of $\mcX$ of the type specified by $\mcS$ by defining $\mcX_S$ to be the limit as in the diagram $$\xymatrix{\mcX_S\ar[r]\ar[d]\ullimit&\mcS\ar[d]\\\mcX\ar[r]&(S,\mcU_S,\id_{\mcU_S}).}$$  We know that $\mcX_\mcS$ has schema $X=X\amalg_SS$ and we momentarily invent notation and write $\mcX_\mcS=(X,\mcK_{\mcS\ss\mcX},\tau_{\mcS\ss\mcX})$.  

The map $\mcX_\mcS\to\mcX$ defines an inclusion of sheaves $\mcK_{\mcS\ss\mcX}\ss\mcK_X$ on $X$, and we take its closure $\ol{\mcK_{\mcS\ss\mcX}}\ss\mcK_X$.  By construction we can now delete this subsheaf objectwise on $\Sub(X)$.  That is, we define for $X'\ss X$ $$\mcK_{\mcX\backslash\mcS}(X')=\mcK_X(X')\backslash\mcK_{\mcS\ss\mcX}(X'),$$ where $A\backslash B$ denotes the maximal subset of $A$ which contains no elements in $B$.

The database $$\mcX'\setto (X,\mcK_{\mcX\backslash\mcS},\tau),$$ where $\tau$ is shorthand for $\tau|_{\mcK_{\mcX\backslash\mcS}}\taking\mcK_{\mcX\backslash\mcS}\to\mcU_X$, is the deletion of $\mcS$ from $\mcX$.  There is a canonical map $\mcX'\to\mcX$ in $\Data$, and one can show that $\mcX'$ is the final object under $\mcX$ whose join with $\mcS$ is empty.

\section{Applications, advantages, and further research}\label{sec:applications}

In this section, we discuss the applications of the category of simplicial databases.  First, simplicial databases can be used wherever relational databases are used; though simplicial databases are more general, they are still closed under applying the usual queries.  On the other hand, there are many advantages to using simplicial databases as opposed to relational ones.  

In Section \ref{subsec:geometric}, we discuss how the geometry of a schema can provide an intuitive picture for the content and layout of a database.  As an example of using category theory to reason about databases, we show in Section \ref{subsec:query} that query equivalences are trivially verified when one phrases them in categorical language.  In Section \ref{subsec:privileges} we discuss how diagrams of databases can give various users different privileges in terms of accessing and modifying data.  In Section \ref{subsec:comparison} we address the issue of comparing our categorification of databases to others versions. Finally, in Section \ref{subsec:further research}, we discuss further research on the subject and open questions.

\subsection{Geometric intuition}\label{subsec:geometric}

\hspace{.1in}In Section \ref{subsec:schemas}, we defined the category $\Sch^\pi$ of schemas for a given type specification $\pi$.  They are based on geometric objects called simplicial sets.  In this section, we show that the geometry of these objects is intuitive and therefore useful in practice.

\begin{example}\label{ex:flights}

In this example, we consider a simplified situation in which one keeps track of the cities from which airplane flights take off and those at which they land.  So suppose we have only one type, $\DT=\{\tn{`City'}\}$ and $U$ is the set of cities in the world that have airports.  Let $X$ be the schema $$\xymatrix@1{~^\tn{`City'}\!\bullet\hspace{-.1cm}\ar@{-}@<-.2ex>[r]&\hspace{-.2cm}\bullet^\tn{`City'}}$$  For our sheaf of keys $\mcK$, we take $\mcK(\tn{`City'})=U$.  Over the 1-simplex $X$ take $\mcK(X)$ to be the set of pairs $(c_1,c_2)$ for which $c_1$ is the city of departure and $c_2$ is the city of arrival for some flight.  Let $\mcX$ denote this database of flights.

Now, joining this database with itself yields a database with schema \begin{figure}[htbp]\label{dia:layover}
\setlength{\unitlength}{3947sp}
\includegraphics{multi-city.jpg}
\put(-2100,50){`City'}
\put(-1200,50){`City'}
\put(-300,50){`City'}
\end{figure}
 whose global sections are ``flights with layover," i.e. pairs of flights with the destination city of the first flight equal to the departing city of the second flight.  Similarly, the database of multi-city trips of a given length $n$ is simply the union (colimit) of $n$ copies of the database of flights $\mcX$ in this way.

Moreover, if we want to use $\mcX$ to find the set of available round-trips, we simply join the ends of the schema in Diagram \ref{dia:layover} to make a circle 

\begin{figure}[htbp]
\setlength{\unitlength}{3947sp}
\includegraphics{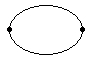}
\put(-1850,450){`City'}
\put(0,450){`City'}
\end{figure}

This is not just heuristic; we have literally taken the indicated limit of databases.  The result is a new database whose global sections are precisely the pairs of flights which constitute a round-trip.

The point is that one can intuit this result by visualizing round-trips as circles, and then applying that vision to the schemas themselves.

\end{example}

\begin{example}\label{ex:sex}

In 2004, Bearman et al. \cite{BMS} present data which shows that at a certain high school called ``Jefferson High," there is a statistically small number of sexual couples that later switch partners.  That is, if $B_1$ and $G_1$ are sexual partners and $B_2$ and $G_2$ are sexual partners, then it rarely happens that later $B_1$ mates with $G_2$ and $B_2$ mates with $G_1$.  As they say ``...we find many cycles of length 4 in the simulated networks, but few in Jefferson..."

Suppose then that we take their raw data and put it on the schema $$\xymatrix@1{~^\tn{`Boyfriend'}\!\bullet\hspace{-.1cm}\ar@{-}@<-.2ex>[r]&\hspace{-.2cm}\bullet^\tn{`Girlfriend'}}$$ which we denote $X$.  Visually, we represent two boys and two girls who switch partners as follows: \begin{eqnarray}\label{dia:switch}\xymatrix{Boys&Girls\\\bullet\hspace{-.1cm}\ar@{-}[r]\ar@{-}[dr]&\hspace{-.1cm}\bullet\\\bullet\hspace{-.1cm}\ar@{-}[ur]\ar@{-}[r]&\hspace{-.1cm}\bullet}\end{eqnarray} (where, say, horizontal lines represent the original partnerships and diagonal lines represent the new partnerships).  And indeed, we can take the union of four copies of $X$ along various vertices to obtain a database with the above 4-cycle schema.  

In other words, there is a way to take raw data over a line segment, representing partnerships, and automatically generate data over the ``switch schema," Diagram (\ref{dia:switch}), just by taking the indicated limit of databases.  The global sections of this new ``switched partners" database are precisely what is being studied in Bearman's paper.

As in Example \ref{ex:flights}, the point is that the shape of the schema is intuitive.  Using schemas that are geometrically intuitive may enhance the ability of users to manipulate and make sense out of the raw data.

\end{example}

\subsection{Query equivalences}\label{subsec:query}

It is well known that joining tables together is very costly.  If one only wishes to consider certain rows or columns of a join, he or she should isolate those rows or columns {\em before} performing the join, not after.  For that reason, one is taught to ``push selects and projects," i.e. to do these operations first.

How does one prove that projecting first and then joining will result in the same database as will joining first and then projecting?  The proofs of results like these are generally tedious.  In this section, we do not claim any new results.  We merely show that these simple query equivalences are obvious when one uses the language of simplicial databases and knows basic category theory. 

For example, it is a standard category-theoretic fact that, in {\em any} category $\mcC$ with limits, there is a natural isomorphism \begin{eqnarray}\label{dia:cat fact}(A\cross_BC)\cross_DE\iso (C\cross_DE)\cross_BA.\end{eqnarray}  Note that both joins and selects are examples of such limits (see Sections \ref{subsec:join} and \ref{subsec:select}).  The formula (\ref{dia:cat fact}) in particular applies to the category $\Data$ of databases and proves that ``selecting $E$ from a join of $A$ and $C$ gives the same result as first selecting $E$ from $C$ and then joining the result with $A$.

Projecting a database to a subschema is easy to describe in the theory of simplicial databases: one simply restricts the sheaf $\mcK$ and the map $\tau$ to that subschema (see Section \ref{subsec:projections}).  The fact that projects commute with joins follows from basic sheaf theory, e.g. that the limit of a diagram of sheaves is the same as the limit of the underlying diagram of presheaves. 

\subsection{Privileges}\label{subsec:privileges}

The sheaf-theoretic nature of our conception of databases lends itself nicely to the idea of privileges.  It often happens that one wishes to give a particular user the ability to modify certain sections of the database but not others.  If $X$ is the schema for a database $\mcX$, perhaps we wish to give a particular user the ability to modify data on the subschema $i\taking X'\ss X$.  

To accomplish this, note that there is a map of databases $$\mcX=(X,\mcK_X,\tau_X)\too(X',i^*\mcK_X,i^*\tau_X)=\mcX'$$  We allow the user to see $\mcX'$ as a database and make changes to it (we could also limit the ways in which this user can modify $\mcX'$ -- only allow insertions, for example).  At any given time, the user only sees the sub-database $\mcX'$.  

Suppose he or she adds a few lines to the sheaf $i^*\mcK_X$ to make it $i^*\mcK_X\cup\mcL$.  To update the main database, we take the colimit of the diagram of sheaves $$\xymatrix{i_!i^*\mcK_X\ar[r]\ar[d]&i_!(i^*\mcK_X\cup\mcL)\\\mcK_X}$$ and the result will be a new sheaf on $X$ with the appropriate insertions.  

Deletions are handled in a somewhat different way, but the idea is the same.  If the user deletes data from the sheaf $i^*\mcK_X$ to obtain the sheaf $i^*\mcK_X\backslash\ol{\mcD}$, then to update the main database may require us to delete entries from larger schemas (see Section \ref{subsec:delete}).  The updated sheaf on $X$ will be the limit of the diagram $$\xymatrix{&\mcK_X\ar[d]\\i_+(i^*\mcK_X\backslash\ol{\mcD})\ar[r]&i_+i^*\mcK_X.}$$

Again, we are not claiming that privileges of this type are anything new.  We are claiming that they are naturally phrased in this categorical language, thus bringing a new and powerful mathematical tool to bear on the problems of the subject.

\subsection{Comparison to other categories of databases}\label{subsec:comparison}

As mentioned in the introduction, many other categorifications of databases have been presented over the years.  One of the nice features of category theory is that one can compare various categories using functors.  Given another categorical formulation of databases, we could try to produce a functor from it to $\Data$ and from $\Data$ back to it.  The way that these functors behave (e.g. if they are adjoint, or if one or the other is fully faithful) will tell us about the relative expressive power of the models, as well as understand how to translate between them.  We hope to work on such a comparison in the future.

\subsection{Further research}\label{subsec:further research}

The category-theoretic and also geometric nature of simplicial databases opens up many directions for future research.  We present a few in this subsection that we intend to pursue.  Many of these ideas were suggested to us by Paea LePendu.

\subsubsection{Topological methods}\label{subsubsec:top}

First, we would like to consider how we might use methods from algebraic topology to study databases.  Recall from Example \ref{ex:simplices} that there is a functor $\Sch\to\Top$ called {\em topological realization} that allows one to naturally view any schema as a topological space.  Furthermore, we already saw in Example \ref{ex:sex} that importing topological ideas can have real world meaning: topological 4-cycles represented pairs of mating couples that switched partners.  

Another example of the usefulness of topological methods is given by ``lifting problems."  Problems of this sort include the famous question ``are there three foods, each pair of which taste good when eaten together, but the threesome of which tastes bad when eaten together?"  

To phrase this in terms of social networks, suppose that for any $n$ people, either this group is said to be a friendship group or it is not.  The above lifting problem becomes: ``are there three people, each pair of which is a friendship group, but the triple is not?"  These types of phenomena can be represented geometrically, so having simplicial sets as schema may be useful for their study.

Homotopical methods from algebraic topology may also be useful.  When one object ``morphs" into another over the course of time (such as a child becoming an adult), it is difficult to know how to treat that object in a database.  Homotopy theory is the study of gradual transformation through time, and the author sees some potential for using it to study real-world phenomena.

Finally, the geometric nature of our schema may be useful for query optimization.  Schemas can be classified according to their geometric structure.  It may be that in performing many queries, a database management system learns that some geometric structures are being used more often than others.  The patterns which emerge may be only visible when one uses schemas that have this higher dimensional geometric nature.

\subsubsection{Functional dependencies and normal forms}

In this paper we have not discussed functional dependencies or normal forms.  It is appealing to ask the following question:

\begin{question}

Let $X\in\Sch$ denote a schema; it should be thought of as having a shape (again, via the topological realization functor $\Sch\to\Top$), namely a union of tetrahedra.  We wonder:

\begin{enumerate}\item Given a set of functional dependencies, is there a natural way to annotate the shape $X$ so that these dependencies are made visual? \item Given a schema $X$ that has been annotated in this way, can one easily determine whether it is in a certain normal form?  \item If an annotated schema is not in normal form, do the annotations help in finding the normalization? \end{enumerate}  If the answer to these questions is affirmative, we will have more evidence that the geometric nature of our schema is useful for database design and management.

\end{question}

We hope to address these questions in the near future.

\subsubsection{Database integration}\label{subsubsec:database integration}

We believe that having a rigorous definition for {\em morphisms of databases} (see Definition \ref{def:database morphisms}) will be of use in the problem of database integration.  The morphisms of databases can account for simultaneous changes in schema and in data.  It is also easy to allow changes in data types as well, a topic we will address in later work.

Also, as mentioned in Remark \ref{rem:internal keys} and Section \ref{subsec:duplication}, the use of internal keys should prove immensely valuable.  Instead of including an arbitrarily chosen identifier for an object as part of the data for that object, as required in the theory of relational databases, our theory keeps these arbitrary identifiers separate.  When attempting to integrate databases, it is imperative that one know which sections of the data are {\em observed and invariant properties} of the objects being classified, and which sections of the data are {\em arbitrarily assigned} for management reasons.  Our theory keeps these sections of the data distinct, by use of a sheaf of keys $\mcK$ that is not considered part of the data.

In future research, we hope to show that database integration is made substantially easier when one works with a rigorous and geometric model like the one we present here.  Before we do so, we need to explain how to work with a change in type specifications, which is not hard, and how to deal with constraints in the data.  See Section \ref{subsubsec:types} for our plans in this direction.

\subsubsection{Ontologies and networks}

One intuitively knows that there is a connection between databases and ontologies.  An ontology is meant for organizing knowledge, a database is meant for organizing information, and there is a strong correlation between the two.  In order to make this correlation precise, one must first find precise definitions of ontologies and databases.  Further, these definitions should be phrased in the same language so that they can be compared.  Category theory was invented for the purposes of comparing different mathematical structures, and as such provides a good setting for this project.

Our plan (see \cite{Spi2})) for a categorical definition of communication networks involves annotating the simplices of a simplicial set with databases.  That is, each node in a network has access to a database of ``what it knows," and connections between nodes allows communication via a common language and set of shared knowledge.  In order to make this precise, we need a precise definition for a category of databases, for which Definition \ref{def:database morphisms} suffices.  

\subsubsection{More exotic types}\label{subsubsec:types}

Throughout this paper, we have fixed a type specification $\pi\taking U\to\DT$, where $\DT$ is a set of data types, and $U$ is the disjoint union of the corresponding domains.  This allows for types like strings, characters, dates, integers, etc.  It also allows for more general types like ``functions from $A$ to $B$" or ``probability distributions on a space."  

However, as flexible as our type specifications may be, the situation can be generalized considerably by allowing $\pi$ to be a functor between categories, rather than a function between sets.  The simplest application is one that is already implicitly used, namely sorting data.  The set of strings is in fact an ordered set, and so can be represented as a category (with a morphism from $A$ to $B$ if $B$ is lexicographically larger than $A$).  Another application comes from putting constraints in the data, like if we only allow (city, state) pairs for which the city is within the state. 

By generalizing type specifications to include categories rather than sets, we open up many new possibilities for making sense of data.  Causal relationships can be represented, as can processes.  In short, morphisms make the theory more dynamic.

\bibliographystyle{amsalpha}
\bibliography{biblio}

\end{document}